\newtheorem{theorem}{Theorem}[section]
\newtheorem{corollary}[theorem]{Corollary}
\newtheorem{lemma}[theorem]{Lemma}
\newtheorem{observation}[theorem]{Observation}
\newtheorem{proposition}[theorem]{Proposition}
\newtheorem{definition}{Definition}
\newcommand{\MyFrame}[1]{\noindent \framebox[\textwidth]{ \begin{minipage}{0.97\textwidth} #1 \end{minipage}}}%
\let\oldnl\nl
\newcommand{\nonl}{\renewcommand{\nl}{\let\nl\oldnl}}
\newcommand{\cardinality}[1]{\left\vert{#1}\right\vert}
\newcommand{\expect}[2]{\mathop{\mathbb{E}}_{#1} \left[ #2 \right]}
\newcommand{\reals}{\mathbb{R}}
\DeclareMathOperator*{\argmax}{arg\,max}
\newcommand{\ind}{\mathbbm{1}}
\newcommand{\items}{M}
\newcommand{\val}{v}
\newcommand{\hrep}{w}
\newcommand{\M}{\mathcal{M}}
\newcommand{\dist}{D}
\newcommand{\distH}{D'}
\newcommand{\feas}{\mathcal{C}}
\newcommand{\feasSet}{\mathcal{S}}
\newcommand{\cutoff}{t}
\newcommand{\rev}[1]{#1R{\small EV}}
\newcommand{\BREV}{\mbox{\rev{B}}}
\newcommand{\SREV}{\mbox{\rev{S}}}
\newcommand{\REV}{\mbox{\rev{}}}
\newcommand{\OPTcopies}{\mbox{OPT}^{\mbox{copies}}}
\newcommand{\SINGLE}{\mbox{SINGLE}}
\newcommand{\CORE}{\mbox{CORE}}
\newcommand{\TAIL}{\mbox{TAIL}}
\newcommand{\NONFAV}{\mbox{NON-FAVORITE}}
\newcommand{\virtVal}{\varphi}
\begin{document}

\title{A Simple and Approximately Optimal Mechanism\\ for a Buyer with Complements}

\author{Alon Eden\thanks{Computer Science, Tel-Aviv University. \texttt{alonarden@gmail.com}. Work done in part while the author was visiting the Simons Institute for the Theory of Computing.} \and Michal Feldman\thanks{Computer Science, Tel-Aviv University, and Microsoft Research. \texttt{michal.feldman@cs.tau.ac.il}.} \and Ophir Friedler\thanks{Computer Science, Tel-Aviv University. \texttt{ophirfriedler@gmail.com}.} \and Inbal Talgam-Cohen\thanks{Computer Science, Hebrew University of Jerusalem. \texttt{inbaltalgam@gmail.com}.} \and S.~Matthew Weinberg\thanks{Computer Science, Princeton University. \texttt{smweinberg@princeton.edu}. Work done in part while the author was a research fellow at the Simons Institute for the Theory of Computing.}}

\maketitle

\thispagestyle{empty}\maketitle\setcounter{page}{0}

\begin{abstract}
We consider a revenue-maximizing seller with $m$ heterogeneous items and a single buyer whose valuation $v$ for the items may exhibit both substitutes (i.e., for some $S, T$, $v(S \cup T) < v(S) + v(T)$) and complements (i.e., for some $S, T$, $v(S \cup T) > v(S) + v(T)$). 
We show that the mechanism first proposed by Babaioff et al.~[2014] - 
the better of selling the items separately and bundling them together - 
guarantees a $\Theta(d)$ fraction of the optimal revenue, where $d$ is a measure on the degree of complementarity. Note that this is the first approximately optimal mechanism for a buyer whose valuation exhibits any kind of complementarity, and extends the work of Rubinstein and Weinberg [2015], which proved that the same simple mechanisms achieve a constant factor approximation when buyer valuations are subadditive, the most general class of complement-free valuations.

Our proof is enabled by the recent duality framework developed in Cai et al.~[2016], which we use to obtain a bound on the optimal revenue in this setting. Our main technical contributions are specialized to handle the intricacies of settings with complements, and include an algorithm for partitioning edges in a hypergraph. Even nailing down the right model and notion of ``degree of complementarity'' to obtain meaningful results is of interest, as the natural extensions of previous definitions provably fail.
\end{abstract}

\newpage

\section{Introduction} \label{sec:intro}

Consider a revenue-maximizing seller with $m$ items to sell to a single buyer. When there is just a single item, and the buyer's value is drawn from some distribution with CDF $F$, seminal works of Myerson~\cite{myerson1981optimal}, and Riley and Zeckhauser~\cite{riley1983optimal} prove that the optimal mechanism is to simply set whatever price maximizes $p \cdot (1-F(p))$. It soon became well-understood that beyond the single-item setting, the optimal mechanism suffers many undesireable properties that make it unusable in practice, including randomization, non-monotonicity, and others~\cite{RochetC98,hart2013menu,hart2012maximal,briest2010pricing,daskalakis2013mechanism,daskalakis2014complexity,thanassoulis2004haggling,Pavlov11a}.
Following seminal work of Chawla, Hartline, and Kleinberg~\cite{chawla2007algorithmic}, there is now a sizeable body of research proving that the simple mechanisms we see in practice are in fact approximately optimal in quite general settings, helping to explain their widespread use~\cite{chawla2010multi,chawla2010power,kleinberg2012matroid,hart2012approximate,babaioff2014simple,bateni2015revenue,rubinstein2015simple,li2013revenue,yao2015n,CaiZ16,chawla2016mechanism}.

Still, prior work has largely been limited to additive\footnote{A buyer valuation is additive if $v(S) = \sum_{i \in S} v(\{i\})$.} or unit-demand\footnote{A buyer valuation is unit-demand if $v(S) = \max_{i \in S} \{v(\{i\})\}$.} buyers. Only recently have researchers begun tackling more complex valuation functions, and even these works have remained restricted to subclasses of \emph{subadditive} valuations, also called \emph{complement-free}~\cite{rubinstein2015simple,chawla2016mechanism,CaiZ16}.\footnote{A valuation is subadditive if $v(S \cup T) \leq v(S) + v(T)$ for all $S, T$.}
While subadditive valuations are quite general, they can only capture interaction between items as \emph{substitutes}. For example, if the items are pieces of furniture, a buyer's marginal valuation for a chair might decrease as her home gets more and more filled due to lack of space. To date, no results in this line of work model iteraction between items as \emph{complements}. For example, a buyer's value for a kitchen table might actually increase if she already has a chair to sit. The goal of this paper is to study simple and approximately optimal mechanisms in domains (like the example above) where buyer valuations exhibit both substitutes and complements.

\paragraph{Buyers with Complements.} Even for the traditionally simpler domain of \emph{welfare} maximization, the state-of-the-art only recently has begun designing mechanisms for buyers with complements~\cite{abraham2012combinatorial,feige2015unifying,feldman2015combinatorial,feldman2016simple}.
The main difficulty is that horrible lower bounds are known for general valuations~\cite{nisan2006communication}, so in order to get interesting positive results, some assumptions are necessary on the degree to which buyer valuations exhibit substitutes or complements.
Interestingly, good positive results are possible in the complete absence of complements and no restriction on the degree of substitutability~\cite{dobzinski2005approximation,dobzinski2007two,feige2009maximizing,feldman2013simultaneous,devanur2015simple}, but not vice versa: many strong lower bounds still exist in the absence of substitutes but with arbitrary complementarity~\cite{lehmann2002truth,abraham2012combinatorial,morgenstern2015market,feldman2016simple}.

So the goal of these recent works is to parameterize the ``degree of complementarity'' that a valuation function admits, and prove an approximation guarantee of $f(d)$ whenever buyer valuations have ``complementarity of degree at most $d$''~\cite{abraham2012combinatorial,feige2013welfare,feige2015unifying,feldman2015combinatorial,feldman2016simple}.
For example, if you were selling a table, chair, bicycle, banana, and socks, you would reasonably expect buyers to view the table and chair as complements, but likely not the bicycle and banana. Similarly, you wouldn't expect any set of three items to be viewed as complements (outside of what's already captured by the table and chair as a pair). So it seems overly pessimistic not to try and exploit this. Ideally, a good formal definition for ``complements of degree $d$'' should make sense in its own right (i.e. without appealing to results) and capture a smooth transition as $d$ grows (i.e. we don't have $f(0) = 1$ and $f(d) = m$ for all $d > 0$).  Interestingly, the right formal definition of ``complements of degree $d$'' seems to differ between environments. Some examples of previous successful definitions include the ``supermodular degree'' and ``positive-hypergraphs degree''~\cite{feige2013welfare,feige2015unifying}.

Nailing down the right model of complementarity degree is even trickier for the revenue objective, as we must also fold some notion of independence into the value distribution in order to avoid extremely strong lower bounds that hold against even additive valuations over two items~\cite{briest2010pricing,hart2013menu}.\footnote{Specifically, there exists a distribution $\dist$ over $\reals^2$ such that when a single additive buyer's valuation is drawn from $\dist$, the optimal revenue for the seller is infinite, but the revenue of the best deterministic mechanism is $1$.} We postpone a formal definition of our model and corresponding notion of complementarity degree to Section~\ref{sec:prelim}, and give an illustrative example here. Imagine you are selling furniture and related goods. Some items naturally exhibit complementarities: with a table, chair, and silverware, a buyer can eat meals at home. With a table, four chairs, and a game of Settlers, they could host a board games night. With two sofas and a TV, they could instead host a movie night. So think of the buyer as having a non-negative valuation for being able to eat meals in their home, host events, etc., and their preliminary value for a set $S$ of items is additive over the activities that $S$ allows them to partake in.\footnote{Such valuations functions are called ``positive-hypergraph'' (PH) valuations.} But there's a catch: no buyer has room in their apartment to comfortably fit a table, TV, two sofas, and four chairs. So the items are also substitutes - even if the buyer were to wind up with the entire warehouse of furniture, they won't get use out of anything besides what fits in their apartment. So we let $\feas$ represent a set system determining which items can fit in the apartment, and let $\hrep(T)$ denote the buyer's value for whatever special activity the items in exactly $T$ allow her to partake in (that she couldn't partake in with any proper subset).
The buyer's value for a set of items $S$ is the maximum over all $S'$ that fit in her apartment of the sum of her values for all the activities she can partake in using $S'$. Independence enters the model by assuming buyers have independent values for different activities, and the degree of complementarity is captured via the maximum number of activities that require any given item.

\paragraph{Main Result.}

Our main result (Theorem~\ref{thm:main}) is that the mechanism proposed by Babaioff et al.~\cite{babaioff2014simple} - 
the better of selling separately (post a price on each item, let the buyer purchase whatever subset she likes) or bundling together (post a single price on the grand bundle, let the buyer purchase or not) - achieves a tight $\Theta(d)$ approximation whenever buyer valuations exhibit complementarity at most $d$. 

We also complete the picture by showing that our notion of complementarity is in some sense the right one: if instead we measure complementarity via the ``supermodular degree,'' then there exist populations in our model with supermodular degree $d$ for which the better of selling separately and bundling together achieves only a $\Omega(2^d/d)$-approximation. Similarly, if we instead measure complementarity via the ``positive-hypergraph degree,'' then there exist populations in our model with positive-hypergraph degree $d$ for which the better of selling separately and bundling together achieves only a $\Omega(\sum_{\ell \leq d}\binom{m}{\ell}/m)$-approximation. Both notions of degree are defined formally in Section~\ref{sec:lowerBound} where the lower bounds are proved. The point is not that $\Theta(d)$ is a ``better'' bound than $\Omega(2^d/d)$, as this is in some sense not a fair comparison, but rather that ``supermodular degree'' and ``positive-hypergraph degree'' are incapable of capturing the smooth transition from low degrees to high degrees of complementarities as they can only take on $m$ different values but provide guarantees that range from $1$ to $\Omega(2^m)$. In comparison, our notion of degree of complementarity takes on $2^{m-1}$ different values, and provides guarantees that range from $1$ to $\Omega(2^m)$, allowing for an exponentially finer-grained tradeoff.
\paragraph{Our Techniques.}
Our starting point is a duality-based upper bound on the optimal achievable revenue coming from recent work of~\cite{cai2016duality}. Their upper bound decomposes into three parts, which they call \textsc{SINGLE}, \textsc{CORE}, and \textsc{TAIL}. So the goal is to show that selling separately well-approximates \textsc{SINGLE}, and that bundling together well-approximates \text{CORE} and \textsc{TAIL}. Fortunately, the analysis of~\cite{cai2016duality} is fairly robust, and we are able to prove that bundling together achieves a constant factor of both \textsc{CORE} and \textsc{TAIL} via a similar approach.
Our main technical contribution appears in Section~\ref{sec:single}, where we prove that selling separately gets an $O(d)$-approximation to \textsc{SINGLE}.
Incidentally, bounding \textsc{SINGLE} happened to be the easiest part of the analysis in~\cite{cai2016duality} for additive valuations.

Without getting into details about what exactly this \textsc{SINGLE} term is, we can still highlight the key challenge. Essentially, we would like to post a different price on each activity. In fact, we can show that the optimal ``activity-pricing scheme'' even obtains a constant-factor approximation to \textsc{SINGLE}.
	The catch is that we sell items, not activities. We may wish to set drastically different prices on many different activities requiring the same item, and it's unclear that we can achieve the desired activity prices by cleverly setting prices on the items separately (in fact, it could be impossible).
	So our main technical contribution is an algorithm to find a subset of activities $S$ for which it is possible to achieve any desired activity-pricing on $S$ by only posting prices on items, and the optimal revenue from activities in $S$ is a $d$-approximation to the optimal {\em activity}-pricing scheme. It turns out that the right sets of activities to search for are ones where each activity requires an item \emph{not} required by any of the others, that the number of collections with this property necessary to partition all activities tightly characterizes the approximation guarantee of selling separately, and that $d$ collections suffice whenever each item is required by at most $d$ activities.
\subsection{Related Work}
\paragraph{Multi-Dimensional Auction Design.}
A rapidly growing body of recent literature has shown that simple mechanisms are approximately optimal in quite general settings~\cite{chawla2007algorithmic,chawla2010multi,chawla2010power,kleinberg2012matroid,hart2012approximate,li2013revenue,babaioff2014simple,yao2015n,rubinstein2015simple,bateni2015revenue,chawla2016mechanism}. Of these, the result most related to ours is~\cite{rubinstein2015simple}, which proves that the better of selling separately and bundling together achieves a constant-factor approximation for a single buyer whose valuation is drawn from a population that is ``subadditive with independent items''. Their model is similar to our model with $d=1$ (but neither subsumes the other), so our results can best be interpreted as an extension of theirs to buyers whose valuations also exhibit complementarity. 

In terms of techniques, our work makes use of a recent duality framework developed in~\cite{cai2016duality}. The same duality framework has been used in concurrent work by the present authors to prove multi-dimensional ``Bulow-Klemperer'' results~\cite{eden2016competition}, and independent work by others to design simple, approximately optimal auctions for multiple subadditive bidders~\cite{CaiZ16}. Still, the duality theory is only used to provide an upper bound on the revenue in all these cases, and the remaining technical contributions are disjoint. In particular, for the present paper, Section~\ref{subsec:virtValAndBenchmark} has a high technical overlap with these works, and Section~\ref{sec:non-favorite} bears some similarity. But our main technical contribution lies in Section~\ref{sec:single}, which is unique to the problem at hand.
{\paragraph{Agents with Complements.}
In recent years there has also been a rapid growth in the design of algorithms and mechanisms in the presence of complements  \cite{abraham2012combinatorial,feige2013welfare,feldman2014constrained,feldman2015building,feige2015unifying,feldman2015combinatorial,feldman2016simple}.
These works consider many different aspects: for example, assuming strategic behavior of agents (or not), assuming the existence of strict substitutes (or not), or focusing on simple mechanisms and quantifying the efficiency of equilibria. 
In all these works, some notion of degree of complementarity was cast on a class of valuation functions, and the approximation ratio guaranteed grew as a function of complementarity degree. 
It is noteworthy that quite often different settings motivate different degrees of complementarity to best capture the degradation in possible guarantees. For instance, \cite{abraham2012combinatorial} uses the positive hypergraph (PH) degree, \cite{feige2013welfare} uses the supermodular degree, \cite{feige2015unifying,feldman2015combinatorial} use the maximum over PH degree, and \cite{feldman2016simple} uses the positive supermodular degree.

In comparison to this literature, ours is the first to consider \emph{revenue} maximization for buyers with complements.

\subsection{Discussion and Future Work}
We present the first simple and approximately optimal mechanism for a buyer whose valuation exhibits both substitutes and complements. We show that for a natural notion of ``degree of complementarity,'' the better of selling separately and selling together achieves a tight $\Theta(d)$-approximation to the optimal revenue. We provide rigorous evidence that this is ``the right'' notion to consider via large lower bounds for classes of valuations that previous definitions would deem simple.

Our main technical contribution is an algorithm to partition a collection of sets into subcollections such that each set (in the subcollection) contains an item not contained in the others (in that same subcollection). Due to the robustness of previously-developed tools like the ``core-tail'' decomposition~\cite{li2013revenue,babaioff2014simple,rubinstein2015simple,yao2015n,chawla2016mechanism}, and duality-based benchmarks~\cite{cai2016duality}, we are able to focus our technical contributions to the specific problem at hand.

The obvious direction for future work would be to see whether simple mechanisms remain approximately optimal for \emph{multiple} buyers with complementarity degree $d$. Doing so would likely require at least one substantial innovation beyond the ideas in this paper, as even the $d=1$ case remains open (even considering the recent breakthrough result of~\cite{CaiZ16}). Our work also contributes to the growing body of evidence that our community now has the tools to ``catch up'' the state-of-the-art for multi-dimensional mechanism design to the wealth of knowledge that currently exists for single-dimensional settings. Considering buyers with complements is one important path in this direction, but there are numerous others as well.

\section{Preliminaries} \label{sec:prelim}
\paragraph{Buyer Valuations.} We consider a setting in which a seller wishes to sell a set $\items$ of $m$ items to a single buyer.
The buyer has a valuation function $\val$ that assigns a non-negative real number $\val(S)$ to every bundle of items $S\subseteq \items$.
The valuation is normalized ($\val(\emptyset)=0$) and monotone ($\val(S) \leq \val(T)$ whenever $S \subseteq T$). We also abuse notation and let $\val(X) = \expect{S \leftarrow X}{\val(S)}$ when $X$ is a random set.

\paragraph{Complementarities.} An increasingly popular model to represent complementarities is via a \emph{positive hypergraph representation}. That is, $\hrep: 2^\items \to \reals^+$ is a non-negative function, and $\hrep(T)$ denotes the bonus valuation that the consumer enjoys from exactly the set of items $T$ (in addition to the value the consumer already enjoys for proper subsets of $T$), i.e., $\val(S) = \sum_{T \subseteq S} \hrep(T)$. In the language of Section~\ref{sec:intro}, $\hrep(T)$ denotes the bidder's value for the activity requiring exactly items in $T$. We will sometimes refer to $T$ as a \emph{hyperedge}, thinking of $\hrep(\cdot)$ as weight function on the hypergraph with nodes $\items$. As an example, if $\val$ is additive, then defining $\hrep(\{i\}) = \val(\{i\})$ and $\hrep(T) = 0$ whenever $|T| > 1$ yields $\val(S) = \sum_{T \subseteq S} \hrep(T)$. We say that $\val$ (or $\hrep$) exhibits \emph{complementarities of degree $d$} if for all $i$, $\cardinality{\{S \ni i : \hrep(S) > 0\}} \leq d$.

\paragraph{Substitutes.} An equally popular model to represent substitutes is via combinatorial constraints. Let $\feas \subseteq 2^\items$ denote a downwards-closed set system on $\items$. $S \notin \feas$ denotes that at least some items in $S$ are substitutes, and the buyer does not derive value from all of $S$. Many valuations that exhibit \emph{only} substitutabilities are ``additive subject to constraints $\feas$'': $\val(S) = \max_{T \subseteq S, T \in \feas}\{\sum_{i \in T} \val(\{i\})\}$. For example, unit-demand valuations can be represented with $\feas = \{T : \cardinality{T} \leq 1\}$.

\paragraph{Complements and Substitutes.} We choose to model substitutes and complementarities together by combining the above two models. That is, there is a positive hypergraph representation $\hrep$ that represents complementarities, and combinatorial constraints $\feas$ that represent substitutabilities, and $\val(S) = \max_{T \subseteq S, T \in \feas} \{\sum_{U \subseteq T} \hrep(U)\}$. Recall the furniture example: $\hrep$ represents that two sofas and a TV allow you to host a movie night, whereas any proper subset doesn't. $\feas$ represents that you can only fit so much furniture in your apartment. We assume w.l.o.g. that $\hrep(T) = 0$ for all $T \notin \mathcal{C}$, as the bidder will never be able to partake in activity $T$ no matter what (because the required items don't fit in the apartment).

\paragraph{Value Distributions.} We model our buyer valuation $\val(\cdot)$ as being drawn from the population $\dist$ in the following way. There are some constraints $\feas$, that are fixed (not randomly drawn). Each $\hrep(T)$ is then drawn independently from some distribution $\distH_T$ for all $T$, and $\val(S) = \max_{T \subseteq S, T \in \feas}\{\sum_{U \subseteq T} \hrep(U)\}$. We say that $\dist$ has complementarity $d$ if all $\val$ in the support of $\dist$ have complementarity $d$. Note that this implies $\dist$ has complementarity $d$ if and only if for all $i$, $\cardinality{\{T \ni i : \Pr[\hrep(T)=0] < 1\}} \leq d$. We use $V$ to denote the support of $\dist$, $f(\val)$ to denote $\Pr_{\hat{\val}\leftarrow \dist}[\hat{\val} = \val]$, and $f_T(y) = \Pr_{x \gets \distH_T}[y = x] $. 

\paragraph{Truthful Mechanisms and Revenue Maximization.} Formally, a mechanism $\M$ has two mappings $X: V \rightarrow \Delta(2^\items)$, and $p:V \rightarrow \reals$. $X$ takes as input a valuation $\val$ and awards a (potentially random) subset of items. $p$ takes as input a valuation $\val$ and charges a price. $\M$ is then \emph{truthful} if for all $\val, \val ' \in V$, $\val(X(\val)) - p(\val) \geq \val(X(\val ')) - p(\val ')$.\footnote{Note that for a single buyer, there is no need to distinguish between Bayesian Incentive Compatible and Dominant Strategy Incentive Compatible - they're the same.} Alternatively, one can view a mechanism as a \emph{menu} that lists options of the form $(X, p)$, where $X \in \Delta(2^\items)$ and $p \in \reals$. A buyer with value $\val(\cdot)$ then selects the menu option $\argmax\{\val(X) - p\}$. It is easy to see the equivalence between the two representations: simply setting $(X(\val),p(\val)) = \argmax\{\val(X) - p\}$ takes one from the menu view to a truthful mechanism. We denote by $\REV (\dist)$ the optimal revenue attainable by any truthful mechanism when buyer valuations are drawn from the population $\dist$.

\paragraph{Simple Mechanisms.} The two simple mechanisms we study are selling separately (\SREV) and bundling together (\BREV). We denote by $\BREV(\dist)$ the optimal expected revenue attainable by selling all items together, and will drop the parameter $\dist$ when it is clear from context. It is well-known that $\BREV(\dist) = \max p\cdot \Pr\left[\val(\items) \geq p\right]$~\cite{myerson1981optimal}. \SREV{} is a touch trickier, as it is NP-hard for buyers in our model to even decide what set of items they wish to purchase at a given set of prices, so it's not even clear how we should evaluate the ``revenue'' of a price vector. We cope with this using a similar approach to~\cite{rubinstein2015simple}: we define $\SREV^*$ to be the optimal revenue attainable by any item pricing \emph{only counting an item as sold if every set the buyer is willing to purchase contains that item}. More formally, for a given item pricing $\vec{p}$, and valuation $\val$, let $P_i(\vec{p}, \val) = 1$ if $\exists S \ni i, \val(S) - \sum_{j \in S} p_j > 0$ and $\forall S \not \ni i, \val(S) - \sum_{j \in S}p_j \leq 0$, and $P_i(\vec{p},\val) = 0$ otherwise. Then $\SREV^*(\dist) = \max_{\vec{p}} \expect{\val\leftarrow \dist}{\sum_i P_i(\vec{p},\val)\cdot p_i}$.

\paragraph{Discrete vs. Continuous Distributions.} Like~\cite{cai2016duality}, we only explicitly consider distributions with finite support. Like their results, all of our results immediately extend to continuous distributions as well via a discretization argument of~\cite{daskalakis2012symmetries,rubinstein2015simple,hartline2010bayesian,hartline2011bayesian,bei2011bayesian}. We refer the reader to~\cite{cai2016duality} for the formal statement and proof. Theorem~\ref{thm:pricing} assumes that for every single-dimensional random variable $X$ and number $q\in [0, 1]$, there exists a threshold $p$ so that $X\geq p$ with probability {\em exactly} $q$, which might a priori seem problematic for discrete distributions. Fortunately, standard ``smoothing'' techniques allow this assumption to be valid for discrete distributions. A formal discussion of this appears in Remark 2.4 of~\cite{rubinstein2015simple}.

\paragraph{The Copies Environment.} In our bounds, we'll make use of a related ``copies environment''~\cite{chawla2007algorithmic,chawla2010multi,chawla2010power,kleinberg2012matroid}. For any product distribution $\distH = \times_{i=1}^k \distH_i$, we define the corresponding copies setting as follows: there is a single item for sale, and $k$ buyers. Buyer $i$'s value for the item is drawn from the distribution $\distH_i$.
{For instance, in our model, the hypergraph representation of the valuation is drawn from $\distH = \times_{S}{\distH_S}$, so we would have a bidder for every subset, with bidder $S$'s value drawn from the distribution $\distH_S$.}

We can then define the benchmark $\OPTcopies(\distH)$ to be the expected revenue obtained by the optimal mechanism (Myerson's~\cite{myerson1981optimal}) on input $\distH$.
Note that this is equal to $\expect{\hrep \leftarrow \distH}{\max_T \{\bar{\virtVal}_T(\hrep(T)),0\}}$, where $\bar{\virtVal}_T(\cdot)$ denotes Myerson's \emph{ironed virtual value} for the distribution $\distH_T$. We will make use of the following theorem from~\cite{chawla2010multi}:

\begin{theorem}[\cite{chawla2010multi}]\label{thm:pricing}
For any $q \leq 1$, there exist (possibly random) prices $\{p_T\}_T$ such that:
    \begin{enumerate}
        \item Revenue is high:
        $\OPTcopies(\distH) \leq \frac{1}{q}\sum_{T\subseteq \items} \expect{p_T}{p_T \cdot \Pr_{x \leftarrow \distH_T}[x \geq p_T]}$.
        \label{thm:pricing:upperBound}
        \item Probability of sale is low:        $\sum_{T\subseteq \items}{\expect{p_T}{\Pr_{x \leftarrow \distH_T}[x \geq p_T]}} \leq q$

        \label{thm:pricing:probUpperBound}
        \item Moreover, each $p_T$ takes on at most two values. If $\distH_T$ is regular, then $p_T$ is a point-mass.\footnote{We will not actually make use of bullet 3 other than to simplify notation, but it might help remind some readers where these prices come from.}
        \label{thm:pricing:twoPrices}
    \end{enumerate}
\end{theorem}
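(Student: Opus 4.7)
The plan is to reduce to Myerson's optimal single-item auction in the copies environment and then scale every bidder's sale probability down by a factor of $q$. Let Myerson's optimal mechanism on the $2^m$ independent bidders with distributions $\{\distH_T\}_T$ realize $\OPTcopies(\distH)$. For each $T$, let $q_T^\star$ denote the probability that bidder $T$ wins the item under this mechanism, and let $\mathrm{Rev}_T^\star$ denote the expected payment from bidder $T$. Since the item is sold to at most one bidder, $\sum_T q_T^\star \leq 1$, and $\OPTcopies(\distH) = \sum_T \mathrm{Rev}_T^\star$.

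For each $T$, let $\bar R_T : [0,1] \to \reals$ be the \emph{ironed revenue curve} of $\distH_T$, i.e., the concave envelope of the quantile-space revenue curve $\alpha \mapsto \alpha \cdot \inf\{p : \Pr_{x \sim \distH_T}[x \geq p] \geq \alpha\}$. Standard consequences of Myerson's theory are: (i) $\bar R_T$ is concave with $\bar R_T(0)=0$; (ii) for any $\alpha \in [0,1]$ there exists a possibly randomized posted price $\bar p_T(\alpha)$ whose expected sale probability is exactly $\alpha$ and whose expected revenue is exactly $\bar R_T(\alpha)$, and moreover $\bar p_T(\alpha)$ takes at most two values (deterministic when $\distH_T$ is regular); and (iii) $\mathrm{Rev}_T^\star \leq \bar R_T(q_T^\star)$, because Myerson's marginal allocation to bidder $T$ is a monotone nondecreasing function of $v_T$ with expectation $q_T^\star$, and $\bar R_T(q_T^\star)$ upper-bounds the expected ironed virtual surplus of any such allocation (by a rearrangement argument, since virtual surplus is maximized by a threshold, or a randomized threshold within an ironing interval, on $v_T$).

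Now set $p_T := \bar p_T(q \cdot q_T^\star)$. Property (ii) gives condition~2 directly: $\sum_T \mathbb{E}[\Pr_{x \sim \distH_T}[x \geq p_T]] = q \sum_T q_T^\star \leq q$. For condition~1, concavity of $\bar R_T$ together with $\bar R_T(0)=0$ yields $\bar R_T(q \cdot q_T^\star) \geq q \cdot \bar R_T(q_T^\star)$, so combining with (ii) and (iii),
$\sum_T \mathbb{E}\bigl[p_T \cdot \Pr_{x \sim \distH_T}[x \geq p_T]\bigr] = \sum_T \bar R_T(q \cdot q_T^\star) \geq q \sum_T \bar R_T(q_T^\star) \geq q \sum_T \mathrm{Rev}_T^\star = q \cdot \OPTcopies(\distH)$.
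Dividing by $q$ yields condition~1, and condition~3 is part of (ii).

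The only non-routine ingredient is the smoothing claim in (ii): for discrete or non-regular $\distH_T$, producing an actual posted price whose sale probability is \emph{exactly} $\alpha$ requires mixing between the two endpoints of the relevant ironing interval (or between two prices bracketing an atom of $\distH_T$), which is precisely the ``at most two values'' clause in bullet~3. This is handled by the standard discretization/smoothing argument referenced in the excerpt (Remark~2.4 of~\cite{rubinstein2015simple}). Everything else---(iii) and the concavity of the ironed revenue curve---is classical Myerson theory.
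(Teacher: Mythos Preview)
Your proof is correct and follows essentially the same strategy as the paper: define $q_T^\star$ as the probability Myerson's auction serves bidder $T$, scale each sale probability down to $q\cdot q_T^\star$, and argue that revenue drops by at most a factor of $q$. The only cosmetic difference is the language used for that last step---the paper works directly with the (ironed) virtual values $X_T = \max\{0,\bar\virtVal_T(\hrep(T))\}$ and shows via a first-order stochastic dominance argument that $\ind_q\cdot X_T\cdot\ind[X_T=\max_{T'}X_{T'}]$ is dominated by $X_T\cdot\ind[X_T\geq t_T]$, whereas you phrase the same fact as concavity of the ironed revenue curve through the origin, $\bar R_T(q\cdot q_T^\star)\geq q\cdot\bar R_T(q_T^\star)$; these are two standard and equivalent formulations.
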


\section{Our Duality Benchmark and Main Theorem Statement} \label{subsec:virtValAndBenchmark}
We extend the duality framework of~\cite{cai2016duality} to our setting in a natural manner. 
Full technical details are deferred to Appendix~\ref{app:duality}. 
The only technical detail needed for stating our revenue benchmark is the following:
we partition the valuation space $V$ into $2^m-1$ different regions, depending on which hyperedge is the most valuable to a buyer with valuation $\val$.
Specifically, we say that $\val$ is in region $R_A$ if $A = \argmax_{T \subseteq \items} \{\hrep(T) \}$, with ties broken lexicographically.
\begin{corollary}\label{cor:benchmark}
{For valuation distribution $\dist$ established by drawing a hypergraph representation $\hrep \gets \prod_S \distH_S$ and returning $\val(S) = \max_{T \subseteq S, T \in \feas}\{\sum_{U \subseteq T} \hrep(U)\}$. 
}
\begin{align*}
\REV(\dist) \leq &
{\expect{\val \leftarrow \dist}{\max_{S \in \mathcal{C}}\{\sum_{T \subseteq S} \hrep(T)\cdot\ind\left[\val \notin R_T \right]\}}}&(\NONFAV)
\\
& +\expect{\val \gets \dist}{\sum_{S \subseteq \items}
\max\{0,\bar{\virtVal}_{S}(\hrep(S))\}\cdot \ind\left[\val \in R_S
\right]} & (\SINGLE)
\end{align*}
\end{corollary}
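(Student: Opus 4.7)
The plan is to specialize the Cai et al.~\cite{cai2016duality} duality framework to our hypergraph-plus-feasibility setting. The primal LP has variables for the randomized allocation and payments, and seeks to maximize expected revenue subject to BIC and IR constraints; taking the dual yields a per-type virtual welfare upper bound controlled by a choice of dual flow. Because we have assumed $\hrep(T)=0$ whenever $T\notin\feas$, without loss of generality the seller always allocates a feasible set $S\in\feas$, and then $\val(S)=\sum_{U\subseteq S}\hrep(U)$. In particular the valuation is additive across the independent ``hyperedge coordinates'' $\hrep(T)\sim\distH_T$, so the Cai--Zhao machinery applies with hyperedges playing the role of items.

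I would then make the standard ``favorite-coordinate'' choice of dual flow, but indexed by hyperedges: within each coordinate $T$, route mass in the Myerson fashion between types that lie in $R_T$, and leave types outside $R_T$ untouched. This produces per-hyperedge virtual values
\[
\phi_T(\val) \;=\; \hrep(T)\cdot\ind[\val\notin R_T] \;+\; \bar{\virtVal}_T(\hrep(T))\cdot\ind[\val\in R_T],
\]
and the virtual-welfare upper bound becomes $\REV(\dist)\le\expect{\val\gets\dist}{\max_{S\in\feas}\sum_{U\subseteq S}\phi_U(\val)}$, where the max reflects that the mechanism may choose any feasible allocation (including $S=\emptyset$, so the bracketed quantity is always nonnegative). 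To separate this into the two advertised terms I apply the elementary inequality $\max_S(f(S)+g(S))\le \max_S f(S)+\max_S g(S)$ with $f(S)=\sum_{U\subseteq S}\hrep(U)\ind[\val\notin R_U]$ and $g(S)=\sum_{U\subseteq S}\bar{\virtVal}_U(\hrep(U))\ind[\val\in R_U]$. The $f$ piece is exactly $\NONFAV$. In the $g$ piece at most one summand is nonzero per $\val$, namely the one indexed by the unique $A$ with $\val\in R_A$; since $\hrep(A)>0$ forces $A\in\feas$ and $\emptyset\in\feas$, the inner max equals $\max\{0,\bar{\virtVal}_A(\hrep(A))\}$, and averaging over $\val$ recovers $\SINGLE$.

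The main obstacle is purely the verification in the first paragraph: writing down the appropriate primal and dual LPs for hypergraph-plus-feasibility valuations and checking that the favorite-hyperedge flow is dual-feasible and produces exactly the claimed $\phi_T$. Because the original Cai--Zhao analysis is already set up for additive-over-independent-coordinates valuations, and the WLOG reduction to $S\in\feas$ makes the buyer's value additive over hyperedges, this verification is essentially a notational rewrite rather than a new technical argument; it is the content of Appendix~\ref{app:duality}. Given that, the decomposition above is routine.
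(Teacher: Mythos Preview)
Your proposal is correct and follows essentially the same route as the paper: instantiate the Cai--Devanur--Weinberg duality (the paper's Appendix~\ref{app:duality}) with the ``favorite-hyperedge'' flow, iron to obtain $\bar{\virtVal}$, and then split the resulting virtual-welfare bound into $\NONFAV$ and $\SINGLE$. Your observation that one may WLOG restrict allocations to $\feas$, making $\val(S)=\sum_{U\subseteq S}\hrep(U)$ genuinely additive over hyperedge coordinates, is a clean shortcut that replaces the paper's two-case analysis in Proposition~\ref{prop:flow}, but the underlying argument is identical.
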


In Section~\ref{sec:single}, we show that $\max\{\SREV^*,\BREV\}$ gets a $4(d+1)$-approximation to $\SINGLE$. This portion of the analysis develops techniques specific to buyers with restricted complements. In Section~\ref{sec:non-favorite}, {we show that $\BREV$ gets a $12$-approximation to \NONFAV}. This portion of the analysis will look somewhat standard to the reader familiar with~\cite{cai2016duality}, with a little extra work to extend their main ideas to our setting. We conclude this section with our main theorem, whose proof will be completed by the end of Section~\ref{sec:non-favorite}:

\begin{theorem} \label{thm:main}
{	For a distribution $\dist$ that has complementary $d$, 
$\REV \leq (4d + 16)\max\{\BREV, \SREV^*\}.$
}
\end{theorem}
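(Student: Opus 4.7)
The theorem is essentially a direct assembly of the two main components advertised in the preceding discussion, so my plan is to treat it as a wrapper proof that combines the duality benchmark with the bounds proved in the two forthcoming sections.

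First, I would invoke Corollary~\ref{cor:benchmark}, which already decomposes the optimal revenue as
\[
\REV(\dist) \;\leq\; \NONFAV + \SINGLE.
\]
This step is immediate, but it is crucial because it reduces the task to bounding two benchmarks separately by something comparable to $\max\{\BREV, \SREV^*\}$.

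Next, I would cite the two approximation results promised by the authors: from Section~\ref{sec:single}, $\SINGLE \leq 4(d+1)\max\{\BREV, \SREV^*\}$, and from Section~\ref{sec:non-favorite}, $\NONFAV \leq 12\,\BREV$. Both of these are of course the real technical work of the paper, but for purposes of proving Theorem~\ref{thm:main} itself they can be invoked as black boxes. Substituting gives
\[
\REV(\dist) \;\leq\; 12\,\BREV + 4(d+1)\max\{\BREV, \SREV^*\} \;\leq\; \bigl(12 + 4(d+1)\bigr)\max\{\BREV, \SREV^*\} \;=\; (4d+16)\max\{\BREV, \SREV^*\},
\]
which is exactly the claimed bound.

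Thus Theorem~\ref{thm:main} is not itself where the difficulty lies; it is a one-line combination of Corollary~\ref{cor:benchmark} with the two section-level bounds. The real obstacles live inside those two sections: in particular, the $\SINGLE$-bound requires the hypergraph edge-partitioning algorithm that is the paper's main technical contribution, since naively an activity-pricing scheme cannot be realized as an item-pricing scheme, and one must instead identify subcollections of activities in which each activity has a ``private'' item and show that $d$ such subcollections suffice to cover all activities when complementarity is at most $d$. Given those ingredients, the proof of Theorem~\ref{thm:main} as stated reduces to the arithmetic above.
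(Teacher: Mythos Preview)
Your proof is correct and follows exactly the paper's approach: combine Corollary~\ref{cor:benchmark} with Proposition~\ref{prop:single} and Proposition~\ref{prop:nonfavorite}. The only cosmetic difference is that the paper writes the $\SINGLE$ bound in the slightly sharper form $\SINGLE \leq 4d\,\SREV^* + 4\,\BREV$ (Proposition~\ref{prop:single}) rather than $4(d+1)\max\{\BREV,\SREV^*\}$, but both yield the same final constant $4d+16$.
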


\section{Bounding \SINGLE}\label{sec:single}

In this section, we prove that the better of selling separately and selling together gets an $O(d)$ approximation to \SINGLE.
\begin{proposition}\label{prop:single}
$\SINGLE \leq 4d \SREV^* + 4\BREV .$
\end{proposition}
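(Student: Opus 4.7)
The plan is to bound $\SINGLE$ by a copies-environment revenue, invoke Theorem~\ref{thm:pricing} to replace that benchmark by a sum of ``activity prices'' on hyperedges, and then implement these activity prices as item prices on carefully chosen sub-families of hyperedges via a partition lemma.

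First I would observe that for each realization $\hrep$ only a single indicator $\ind[\val \in R_T]$ fires (the one with $T = \argmax_{T'}\hrep(T')$), so
\[
\SINGLE \;\leq\; \expect{\hrep \gets \distH}{\max_T \max\{0,\bar{\virtVal}_T(\hrep(T))\}} \;=\; \OPTcopies(\distH),
\]
where $\distH = \prod_T \distH_T$ is the induced product distribution over hyperedge weights. Invoking Theorem~\ref{thm:pricing} with $q = \tfrac{1}{2}$ then supplies (possibly random) prices $\{p_T\}$ satisfying $\OPTcopies(\distH) \leq 2\sum_T p_T\Pr[\hrep(T)\geq p_T]$ and $\sum_T \Pr[\hrep(T)\geq p_T] \leq \tfrac{1}{2}$, so it suffices to prove $\sum_T p_T\Pr[\hrep(T)\geq p_T] \leq 2d\,\SREV^* + 2\,\BREV$.

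The combinatorial heart of the argument is the following partition lemma, which I would state and prove separately: if $\mathcal{H}\subseteq 2^\items$ is any family of hyperedges in which every item of $\items$ lies in at most $d$ hyperedges of $\mathcal{H}$, then $\mathcal{H}$ admits a partition into $d$ sub-families $\mathcal{E}_1,\ldots,\mathcal{E}_d$ such that every $T\in\mathcal{E}_j$ contains some item $i_T$ absent from every other $T'\in\mathcal{E}_j\setminus\{T\}$. A natural greedy algorithm---process hyperedges in order of increasing size and place each into the first collection whose private-item property survives---should give $d$ collections via an exchange argument exploiting the item-degree bound. I would apply this lemma to $\mathcal{H} = \{T : \distH_T \text{ is not a point mass at } 0\}$, which has maximum item-degree at most $d$ by hypothesis. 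For each collection I would then construct the item-pricing $\vec{p}^{(j)}$: price $i_T$ at $p_T$ for each $T\in\mathcal{E}_j$, price every other item in $\bigcup_{T\in\mathcal{E}_j} T$ at $0$, and price all items outside that union at $\infty$. Because the private items are distinct across hyperedges in $\mathcal{E}_j$ and every $T\in\mathcal{E}_j$ is contained in $\{i_T\}$ together with items priced at $0$, purchasing any sub-collection $\{i_{T_1},\dots,i_{T_k}\}\subseteq\{i_T : T\in\mathcal{E}_j\}$ costs exactly $\sum_\ell p_{T_\ell}$ and allows the buyer to realize exactly the activities $T_1,\dots,T_k\in\mathcal{E}_j$; in this sense $\vec{p}^{(j)}$ implements the desired activity pricing on $\mathcal{E}_j$ exactly.

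The main obstacle, which I expect to require the most careful analysis, is lower-bounding $\SREV^*(\vec{p}^{(j)})$. The $\SREV^*$ definition counts $i_T$ as sold only when \emph{every} utility-maximizing bundle contains it, and both hyperedges in $\mathcal{H}\setminus\mathcal{E}_j$ and the combinatorial constraints $\feas$ may perturb the buyer's demand. My plan is to split $\sum_T p_T\Pr[\hrep(T)\geq p_T]$ into a ``core'' part (where only a single hyperedge in $\mathcal{H}$ exceeds its threshold) and a ``tail'' part (where two or more do); by independence and $\sum_T \Pr[\hrep(T)\geq p_T]\leq \tfrac{1}{2}$, the tail part is at most half of the total. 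On the core event I would argue that, restricted to $\mathcal{E}_j$, the buyer's best response under $\vec{p}^{(j)}$ uniquely contains $i_T$ whenever $T$ is the lone active hyperedge, yielding $\SREV^*(\vec{p}^{(j)}) \geq \sum_{T\in\mathcal{E}_j} p_T\Pr[A = \{T\}]$; summing over $j$ and using $\SREV^*(\vec{p}^{(j)})\leq \SREV^*$ contributes the $2d\,\SREV^*$ term. The residual tail contribution---rare but potentially high-value events where multiple thresholds fire simultaneously, or where hyperedges in $\mathcal{H}\setminus\mathcal{E}_j$ bleed into the buyer's valuation---is absorbed by $\BREV$ via a standard bundle-pricing argument: on those events the buyer's value for the grand bundle exceeds the sum of active prices by a constant factor, so a single grand-bundle price captures the remaining $2\,\BREV$.
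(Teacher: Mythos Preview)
Your overall architecture---bound $\SINGLE$ by $\OPTcopies$, invoke Theorem~\ref{thm:pricing}, partition the hyperedges into $d$ sub-families each admitting a system of private items, and implement activity prices via item prices on the private items---matches the paper's. The gap is in the step you correctly flag as the main obstacle: lower-bounding $\SREV^*$ under the item pricing $\vec p^{(j)}$.

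Your ``core event'' (exactly one hyperedge $T\in\mathcal H$ exceeds its threshold) does \emph{not} force the buyer to purchase $i_T$. The thresholds $p_{T'}$ can be arbitrarily small, so ``$\hrep(T')<p_{T'}$ for all $T'\neq T$'' says essentially nothing about the magnitude of those weights; many hyperedges in $\mathcal H\setminus\mathcal E_j$ (and even in $\mathcal E_j$, if they are subsets of the free items plus some other private item) can contribute substantial value to bundles built entirely from items you have priced at $0$. Coupled with the downward-closed constraints $\feas$, the buyer's marginal value for adding $i_T$ to the free items can be much smaller than $\hrep(T)$---she may be forced to abandon a feasible set of ``free'' value to make room for $T$---so her optimal bundle need not contain $i_T$ even when $\hrep(T)\geq p_T$. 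Your tail absorption via $\BREV$ is similarly underspecified: there is no reason the grand-bundle value should exceed the sum of active prices by a constant factor on the tail event.

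The paper's fix is the missing idea. Rather than conditioning on which thresholds fire, it first separates hyperedges by whether $p_T\le 4\,\BREV$ or $p_T>4\,\BREV$. The low-price terms sum to at most $4\,\BREV$ directly because $\sum_T\Pr[\hrep(T)\ge p_T]\le 1$ (take $q=1$ in Theorem~\ref{thm:pricing}). For the high-price terms, set the private item's price to $p_T/2$: whenever $\hrep(T)\ge p_T$, the buyer's utility from $T$ alone is at least $p_T/2\ge 2\,\BREV$, so any competing bundle not containing $T$ must have value at least $2\,\BREV$; but such a bundle has value at most $\val(\items)-\hrep(T)=\sum_{U\neq T}\hrep(U)$, which is independent of $\hrep(T)$ and exceeds $2\,\BREV$ with probability at most $1/2$. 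This yields $\SREV^*\ge\tfrac14\sum_{T\in\mathcal E_j,\,p_T>4\BREV}p_T\Pr[\hrep(T)\ge p_T]$ for each $j$, and summing over the $d$ parts gives $4d\,\SREV^*+4\,\BREV\ge\SINGLE$. Your sketch has no analogue of this threshold trick, and without it the $\SREV^*$ lower bound does not go through.
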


We begin by relating $\SINGLE$ to $\OPTcopies$:

\begin{observation}\label{obs:copies}
$\SINGLE \leq \OPTcopies .$
\end{observation}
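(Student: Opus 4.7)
The plan is to exploit the partition structure of the regions $\{R_A\}_{A \subseteq \items}$ established in Section~\ref{subsec:virtValAndBenchmark}. Since those regions partition the support $V$, for each realized valuation $\val$ there is precisely one set $A(\val) \subseteq \items$ with $\val \in R_{A(\val)}$ --- namely the lexicographically-first hyperedge $A$ attaining $\max_T \hrep(T)$. My first step is to use this to collapse the inner sum in the definition of \SINGLE{} into a single summand, obtaining
\[
\SINGLE = \expect{\val \gets \dist}{\max\{0,\bar{\virtVal}_{A(\val)}(\hrep(A(\val)))\}}.
\]

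My second step is to dominate a single specific index by the maximum over all indices: for every realization,
\[
\max\{0,\bar{\virtVal}_{A(\val)}(\hrep(A(\val)))\} \leq \max_{T \subseteq \items}\{\bar{\virtVal}_T(\hrep(T)),0\}.
\]
Taking expectations on both sides yields $\SINGLE \leq \expect{\val \gets \dist}{\max_T\{\bar{\virtVal}_T(\hrep(T)),0\}}$. My third step observes that in our model $\val$ is a deterministic function of the hypergraph representation $\hrep \sim \distH = \times_S \distH_S$, so the right-hand side equals $\expect{\hrep \gets \distH}{\max_T\{\bar{\virtVal}_T(\hrep(T)),0\}}$, which is exactly the Myerson characterization of $\OPTcopies(\distH)$ recalled in Section~\ref{sec:prelim}.

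There is no real obstacle --- the argument is essentially three lines, relying only on the partition property of the regions and on unfolding the definition of $\OPTcopies$. The role of this observation is preparatory: it hands off the heavy lifting to Theorem~\ref{thm:pricing}, so that the remaining effort in Proposition~\ref{prop:single} can be devoted to the genuinely new combinatorial work of partitioning hyperedges into subcollections on which item prices can faithfully simulate activity prices.
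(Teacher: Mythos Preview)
Your proposal is correct and matches the paper's proof essentially line for line: both use the fact that the regions $R_A$ partition $V$ to collapse the sum to a single term, then bound that term by the maximum over all hyperedges, and identify the resulting expectation with $\OPTcopies$. The only difference is cosmetic---you introduce the notation $A(\val)$ explicitly, whereas the paper leaves the sum written with the indicator and bounds it directly.
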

\begin{proof}
    First, observe that there is exactly one $S$ for which $\ind[\val \in R_S] = 1$. So it is certainly the case that for all $\val$ (with $\val(S) = \sum_{T \subseteq S} \hrep(T)$), we have:

    $$\sum_{S \subseteq \items} \max\{0, \bar{\virtVal}_S(\hrep(S))\} \cdot \ind [ \val \in R_S] \leq \max_{S \subseteq \items}\{0,\bar{\virtVal}_S(\hrep(S))\}.$$
    $$\Rightarrow \mathbb{E}_{\val \leftarrow \dist}\left[ \sum_{S \subseteq \items} \max \{0, \bar{\virtVal}_S(\hrep(S)) \} \cdot \ind [\val \in R_S] \right]\leq \mathbb{E}_{\val \leftarrow \dist} \left[ \max_{S \subseteq \items} \{0, \bar{\virtVal}_S(\hrep(S))\}\right].$$

    Above, the LHS is exactly $\SINGLE$, and the RHS is exactly $\OPTcopies$.
\end{proof}

Note that if the buyer's valuation were additive, at this point we'd already be finished. We could simply set the prices guaranteed by Theorem~\ref{thm:pricing} and be done. As we consider more complex buyer valuations, there are two barriers we must overcome. The first is due to substitutability: if we try to set prices on each subset separately, just because the buyer is \emph{willing} to purchase set $S$ doesn't mean he will \emph{choose} to purchase set $S$, because he may purchase some substitutes instead. Note that this issue doesn't arise in absense of substitutes: if the buyer is willing to purchase $S$ by itself, he is certainly willing to add $S$ to any other set of purchased items. The second barrier is due to complementarity: even once we decide the ``correct'' price to charge for set $S$, we can only set prices on \emph{items} and not on \emph{bundles}. Therefore, the prices we want to set for different bundles necessarily interfere with each other. This is the novel barrier unique to values with complementarity, and is also the only part of the analysis where the (necessary) factor of $d$ arises. 

The first step to overcoming the complements barrier is to find a subset of bundles for which we can still set the appropriate prices. As a warm-up, let's see what the argument would look like assuming that there were only complements and no substitutes ($\feas = 2^\items$):

\begin{lemma}\label{lem:okay}
Let $\feas = 2^\items$ and $T_1,\ldots, T_k$ be subsets of $\items$ such that $T_i \not \subseteq \cup_{j \neq i} T_j$ for all $i$. Then for all $\{p_T\}_{T \subseteq \items}$, $\SREV \geq \sum_i p_{T_i}\Pr_{x \leftarrow \distH_{T_i}}[x \geq p_{T_i}]$. 
\end{lemma}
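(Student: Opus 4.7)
The plan is to exhibit a single item pricing $\vec q$ whose realized revenue, on every realization of $\hrep$, is at least $\sum_i p_{T_i}\,\ind[\hrep(T_i) \geq p_{T_i}]$; taking expectations then yields the claim by linearity. The construction is the \emph{private-item pricing}: by the hypothesis $T_i \not\subseteq \bigcup_{j \neq i} T_j$, I fix some $e_i \in T_i \setminus \bigcup_{j \neq i} T_j$ for each $i$, and these $e_i$ are pairwise distinct. I then set $q_{e_i} := p_{T_i}$ for each $i \in [k]$ and $q_j := 0$ for every other item $j$. Conceptually, $e_i$ serves as a handle by which the buyer pays for activity $T_i$, and because no $e_j$ with $j \neq i$ lies in $T_i$, these handles do not interfere with one another.

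Next I would analyze the buyer's best response to $\vec q$. Monotonicity of $\val$ together with the zero pricing outside $\{e_j\}_j$ means the buyer always includes $F := \items \setminus \{e_1,\ldots,e_k\}$ for free, so her remaining decision is which subset $E \subseteq \{e_1,\ldots,e_k\}$ to add. The key claim is that in her utility-maximizing bundle, she weakly prefers $e_i \in E$ whenever $\hrep(T_i) \geq p_{T_i}$. Two ingredients drive this: (i) since $\feas = 2^\items$ and $\hrep \geq 0$, a short indicator calculation shows $\val(S) = \sum_{T \subseteq S} \hrep(T)$ is supermodular; (ii) the marginal $\val(F \cup \{e_i\}) - \val(F)$ picks up $\hrep(T_i)$ in full, because the private-item property forces $T_i \setminus \{e_i\} \subseteq F$. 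Supermodularity then propagates this marginal to any superset $F \cup E'$ with $e_i \notin E'$, so the marginal utility of adding $e_i$ is at least $\hrep(T_i) - p_{T_i} \geq 0$ on our event. Under seller-favorable tie-breaking, $e_i$ is therefore purchased, generating revenue $p_{T_i}$.

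Summing across $i$ and taking expectations by linearity yields $\SREV \geq \sum_i p_{T_i}\,\Pr_{x \leftarrow \distH_{T_i}}[x \geq p_{T_i}]$, completing the proof.

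The main subtlety I anticipate is the boundary event $\hrep(T_i) = p_{T_i}$, where the supermodularity argument only yields weak preference for $e_i$. The cleanest resolution is to appeal to the standard seller-favorable tie-breaking convention; equivalently, one can perturb each $p_{T_i}$ infinitesimally downward and take a limit, or invoke the discrete-distribution smoothing discussion from the preliminaries to move mass off the boundary.
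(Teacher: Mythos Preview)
Your proposal is correct and follows essentially the same approach as the paper: pick a private item $e_i \in T_i \setminus \bigcup_{j\neq i} T_j$, price it at $p_{T_i}$, price everything else at zero, and argue that whenever $\hrep(T_i) \geq p_{T_i}$ the buyer purchases $e_i$. The paper's proof is terser---it simply notes that with $\feas = 2^\items$ the buyer will always add $T_i$ to whatever else she purchases when $\hrep(T_i) \geq p_{T_i}$---whereas you spell out the supermodularity argument and the tie-breaking issue explicitly, but the underlying idea is identical.
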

\begin{proof}
Set price $p_{T_i}$ on the item contained in $T_i$ but not $\cup_{j \neq i} T_j$ (if there are multiple, select one arbitrarily). Then by hypothesis, the price the bidder would have to pay in order to receive the entire set $T_i$ is exactly $p_{T_i}$. Because $\feas = 2^\items$, whenever $\hrep(T_i) \geq p_{T_i}$, the buyer will choose to purchase the set $T_i$ in addition to whatever else they choose to purchase. Therefore, the item contained in $T_i$ but not $\cup_{j \neq i} T_j$ is purchased with probability at least $\Pr_{x \leftarrow \dist_{T_i}}[x \geq p_{T_i}]$, and the revenue of this item pricing is at least $\sum_i p_{T_i} \Pr_{x \leftarrow \distH_{T_i}}[x \geq p_{T_i}]$. 
\end{proof}
 
The proof of Lemma~\ref{lem:okay} makes use of the assumption that $\feas = 2^\items$ in exactly one place: to argue that whenever $\hrep(T_i) \geq p_{T_i}$, the buyer chooses to purchase the complete set $T_i$. When $\feas \neq 2^\items$, it may be the case that even though the buyer is willing to purchase set $T_i$, she chooses to purchase substitutes instead. We can remove this assumption on $\feas$ by restricting attention to certain price vectors.

\begin{lemma}\label{lem:stillokay}
Let $\feas$ be any downwards closed set system and $T_1,\ldots, T_k$ be subsets of $\items$ such that $T_i \not \subseteq \cup_{j \neq i} T_j$ for all $i$. Then for all $\{p_T\}_{T \subseteq \items}$ such that $p_T \geq 4 \BREV$ for all $T$, $\SREV^* \geq \tfrac{1}{4} \sum_i p_{T_i}\Pr_{x \leftarrow \distH_{T_i}}[x \geq p_{T_i}]$. 
\end{lemma}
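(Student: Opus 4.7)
The plan is to mimic the construction of Lemma~\ref{lem:okay} but exploit the hypothesis $p_T \geq 4\BREV$ to control the additional slack introduced by substitutes. Concretely, for each $i$ we select (once and for all) an item $a_i \in T_i \setminus \cup_{j \neq i} T_j$, and consider the item pricing $\vec{p}^{\,*}$ defined by $p_{a_i}^{\,*} = p_{T_i}$ and $p_j^{\,*} = 0$ for every other item $j$. We will lower bound the contribution of $a_i$ to $\SREV^*(\vec{p}^{\,*})$ by $\tfrac{1}{4}p_{T_i} q_{T_i}$, where $q_{T_i} := \Pr_{x \leftarrow \distH_{T_i}}[x \geq p_{T_i}]$, and sum over $i$.

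\textbf{Key event decomposition.} Item $a_i$ counts as sold under $\SREV^*$ at the valuation $\val$ precisely when (a) some set $S \ni a_i$ has strictly positive utility under $\vec{p}^{\,*}$, and (b) every set $S \not\ni a_i$ has non-positive utility. Condition first on the event $E_i = \{\hrep(T_i) \geq p_{T_i}\}$, which has probability $q_{T_i}$. Since $T_i \in \feas$ by our WLOG assumption, the utility of $T_i$ under $\vec{p}^{\,*}$ is at least $\hrep(T_i) - p_{T_i} \geq 0$; the smoothing argument cited in Remark~2.4 of~\cite{rubinstein2015simple} lets us upgrade this to strictly positive, so (a) holds on $E_i$.

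\textbf{Controlling (b) via $\BREV$.} The technical heart is to show that, conditional on $E_i$, event (b) holds with probability at least $1/4$. Any violating set $S \not\ni a_i$ splits into two types: either $S$ contains some $a_\ell$ with $\ell\neq i$, in which case $\val(\items) \geq \val(S) > p_{T_\ell} \geq 4\BREV$; or $S$ contains no $a_\ell$ at all, in which case the buyer can extract positive value from a hyperedge disjoint from $\{a_1,\dots,a_k\}$. The first type is controlled by Markov applied to $\BREV$: since $\BREV \geq p_{T_\ell}\cdot\Pr[\val(\items)\geq p_{T_\ell}]$, the union over $\ell$ has probability at most $\sum_\ell \BREV/p_{T_\ell}$, which can be kept small by scaling and by using that $\val(\items)$ dominates utilities. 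The second type is handled by raising the prices on items outside $\cup_i T_i$ to a positive level (tied to $\BREV$) so that any such substitute bundle becomes unprofitable except on a low-probability event controlled again by $\BREV$. Combining, $\Pr[a_i\text{ sold}] \geq q_{T_i}/4$, and summing $p_{T_i}\Pr[a_i\text{ sold}]$ over $i$ delivers the lemma.

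\textbf{Main obstacle.} The cleanest obstacle is that the crude inequality $\Pr[\val(\items)\geq p_{T_i}]\leq 1/4$ only rules out substitute bundles whose utility is at least $p_{T_i}$; it does not rule out substitute bundles whose utility is merely strictly positive, which is what $\SREV^*$ requires. Handling this is where the hypothesis $p_T\geq 4\BREV$ enters nontrivially: we must identify, for each potential substitute $S\not\ni a_i$, some "witness" $\hrep(T)\geq p_T$ (or a comparable event) whose probability is small enough that a union bound costs only a constant factor, and it is likely simplest to do this by slightly modifying the pricing on items not in $T_i$ rather than by arguing about the canonical pricing directly.
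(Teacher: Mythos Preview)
Your proposal has a genuine gap; you correctly diagnose the obstacle in your final paragraph but neither of your suggested remedies closes it. For the first type (sets containing some other $a_\ell$): your bound via $\val(\items)$ is vacuous once you condition on $\hrep(T_i)\geq p_{T_i}\geq 4\BREV$, since then $\val(\items)\geq 4\BREV$ deterministically. For the second type (sets containing no $a_\ell$): such a set $S$ may lie entirely inside $(\cup_j T_j)\setminus\{a_1,\ldots,a_k\}$, where all your prices are zero, so raising prices on items outside $\cup_j T_j$ cannot touch it; and its utility equals $\val(S)$, which is generically positive.

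The paper's argument rests on two ideas you are missing. First, \emph{halve the posted prices}: set the price on $a_i$ to $p_{T_i}/2$ rather than $p_{T_i}$. Then on $E_i=\{\hrep(T_i)\geq p_{T_i}\}$ the buyer's utility from purchasing $T_i$ is at least $p_{T_i}-p_{T_i}/2=p_{T_i}/2\geq 2\BREV$, so she can decline to include $a_i$ only if some $S\not\ni a_i$ has utility at least $p_{T_i}/2$, which (prices being nonnegative) forces $\val(S)\geq 2\BREV$. Second, the right \emph{independence} observation: $\max_{S\not\ni a_i}\val(S)=\val(\items\setminus\{a_i\})$ depends only on hyperedges not containing $a_i$ and is therefore independent of $\hrep(T_i)$. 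Hence even conditioned on $E_i$ one may apply the unconditional bound $\Pr[\val(\items\setminus\{a_i\})\geq 2\BREV]\leq \Pr[\val(\items)\geq 2\BREV]\leq 1/2$. The halving costs a factor $2$ and the conditional probability another factor $2$, giving the stated $1/4$. Note that your two-type case split dissolves: once the utility gap is $\geq 2\BREV$, all competitors $S\not\ni a_i$ are handled simultaneously by the single event $\{\val(\items\setminus\{a_i\})<2\BREV\}$.
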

\begin{proof}
	Set price $p_{T_i}/2$ on the item contained in $T_i$ but not $\cup_{j \neq i} T_j$ (if there are multiple, again select one arbitrarily).
	The price the bidder would have to pay in order to receive the entire set $T_i$ is exactly $p_{T_i}/2$.
	Suppose $\hrep(T_i) \geq p_{T_i}$. Then, the buyer is not only willing to purchase $T_i$, but also gets utility at least $p_{T_i}/2$ for doing so.
	The only reason she would choose not to purchase this set is if there were some other set $S$ with $T_i \not \subseteq S$ and $\val(S) \geq p_{T_i}/2 \geq 2\BREV$. As $\val(S) \leq \val(\items)-\hrep(T_i)$ for all such $S$, in order for such a set to exist, it must be the case that $\val(\items) - \hrep(T_i) \geq 2 \BREV$. Clearly, this occurs with probability at most $\frac{1}{2}$, as otherwise we could set price $2\BREV$ on the grand bundle, sell with probability strictly larger than $\frac{1}{2}$ and make revenue strictly larger than $\BREV$. Moreover, $\val(\items) - \hrep(T_i) = \sum_{U \neq T_i} \hrep(U)$ is completely independent of $\hrep(T_i)$. Therefore, even conditioned on $\hrep(T_i) \geq p_{T_i}$, the probability that the bidder is interested in some other set $S$ with $T_i \not \subseteq S$ is at most $\frac{1}{2}$, and therefore the buyer indeed chooses to purchase $T_i$ with probability at least $\Pr_{x \leftarrow \distH_{T_i}}[x \geq p_{T_i}]\cdot \frac{1}{2}$.
\end{proof}

Finally, we can combine Lemma~\ref{lem:stillokay} with Theorem~\ref{thm:pricing} to reduce our search to the problem of partitioning the hyperedges into collections $H_x = \{T_{x1},\ldots, T_{xk_x}\}$ such that $T_{xi} \not \subseteq \cup_{j \neq i} T_{xj}$ for all $i$.

\begin{corollary}\label{cor:stillokay}
Let $\feas$ be any downwards closed set system, and let $\{H_x\}_{x \in [k]}$ be a partition of the hyperedges $\{T : f_T(0) < 1\}$ such that for all $x$, and all $T \in H_x$, $T \not \subseteq \cup_{T' \in H_x \setminus \{T\}} T'$. Then $4k\SREV^* + 4\BREV \geq \SINGLE$.
\end{corollary}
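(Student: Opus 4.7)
The plan is to apply, in sequence, Observation~\ref{obs:copies} (to pass to the copies environment), Theorem~\ref{thm:pricing} with parameter $q=1$ (to replace the copies benchmark by a sum of ``single-threshold revenues''), and then Lemma~\ref{lem:stillokay} separately within each class $H_x$ of the given partition. Concretely, Observation~\ref{obs:copies} gives $\SINGLE \leq \OPTcopies(\distH)$ with $\distH = \prod_T \distH_T$, and Theorem~\ref{thm:pricing} with $q = 1$ produces (possibly randomized) thresholds $\{p_T\}_T$ satisfying
\[
\OPTcopies(\distH) \leq \sum_{T} \expect{p_T}{p_T \cdot \Pr_{y \leftarrow \distH_T}[y \geq p_T]},\qquad \sum_{T} \expect{p_T}{\Pr_{y \leftarrow \distH_T}[y \geq p_T]} \leq 1.
\]
Hyperedges with $f_T(0) = 1$ contribute zero revenue and may be ignored, so the sum above effectively runs over $\bigcup_x H_x$.

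For each realization of the thresholds, I split the hyperedges into \emph{low} ones ($p_T < 4\BREV$) and \emph{high} ones ($p_T \geq 4\BREV$). The low part is trivially bounded: $\sum_{T \text{ low}} p_T \Pr[y \geq p_T] \leq 4\BREV \cdot \sum_T \Pr[y \geq p_T] \leq 4\BREV$, taking expectations over the price randomness using the second bullet of Theorem~\ref{thm:pricing}. For the high part, I observe that within any class $H_x$ the sub-family $H_x^{\text{high}} := \{T \in H_x : p_T \geq 4\BREV\}$ still satisfies $T \not\subseteq \bigcup_{T' \in H_x^{\text{high}} \setminus \{T\}} T'$, because removing sets from a collection can only shrink the right-hand union. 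Applying Lemma~\ref{lem:stillokay} to $H_x^{\text{high}}$ (whose prices all exceed $4\BREV$) therefore yields $\SREV^* \geq \tfrac{1}{4}\sum_{T \in H_x^{\text{high}}} p_T \Pr[y \geq p_T]$ for every $x$ and every realization of the prices.

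Summing over $x \in [k]$ gives $4k\,\SREV^* \geq \sum_{T \text{ high}} p_T \Pr[y \geq p_T]$ pointwise, hence also in expectation over $\{p_T\}$; combining with the low-price bound and the reduction to $\OPTcopies$ yields $\SINGLE \leq 4k\,\SREV^* + 4\BREV$, as desired. The only nonroutine step is the monotonicity observation that the ``not contained in the union of the others'' hypothesis of Lemma~\ref{lem:stillokay} is preserved under taking subcollections of $H_x$; this is precisely what allows the \emph{adaptive} low/high split (which depends on the particular realization of Theorem~\ref{thm:pricing}'s possibly random prices) to mesh cleanly with the \emph{fixed} partition $\{H_x\}$. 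Randomization of $\{p_T\}$ is otherwise handled entirely by applying Lemma~\ref{lem:stillokay} pointwise and taking a final expectation.
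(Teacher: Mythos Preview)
Your proof is correct and follows essentially the same approach as the paper: reduce to $\OPTcopies$ via Observation~\ref{obs:copies}, invoke Theorem~\ref{thm:pricing} with $q=1$, split at the threshold $4\BREV$, bound the low part by the probability constraint, and bound the high part via Lemma~\ref{lem:stillokay} applied within each $H_x$. The only minor difference is that the paper first derandomizes using the two-value structure of the prices (bullet~\ref{thm:pricing:twoPrices} of Theorem~\ref{thm:pricing}) before applying Lemma~\ref{lem:stillokay}, whereas you apply Lemma~\ref{lem:stillokay} pointwise per realization and take expectations; your route is arguably cleaner since it avoids bullet~\ref{thm:pricing:twoPrices} entirely, and your explicit remark that the hypothesis of Lemma~\ref{lem:stillokay} is preserved under passing to subcollections is something the paper uses but leaves implicit.
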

\begin{proof}
	Take $q=1$ in Theorem~\ref{thm:pricing} and let $\{p_T\}_{T \subseteq \items}$ be the guaranteed (randomized) prices. By Theorem~\ref{thm:pricing} condition~\ref{thm:pricing:twoPrices}, there exist two deterministic prices $p^H_T \geq p^L_T$ and probabilities $q_T$ such that $p_T = p^H_T$ with probability $q_T$, and $p_T = p^L_T$ with probability $1-q_T$. Therefore, Theorem~\ref{thm:pricing} condition~\ref{thm:pricing:upperBound} can be rewritten as:
	$$\OPTcopies \leq \sum_{T \subseteq \items} q_T p^H_T \cdot \Pr_{x \leftarrow \distH_T}[x \geq p^H_T] + (1-q_T) p^L_T \cdot \Pr_{x \leftarrow \distH_T}[x \geq p^L_T]$$
	
	We can further rewrite this by breaking up the two sums into prices that exceed $4\BREV$, and those that don't,
	let $\mathcal{B} = 4\BREV$ for simplicity:
	
	\begin{align*}
	\OPTcopies \leq& \sum_{T \subseteq \items, p^H_T \leq \mathcal{B}} q_T p^H_T \cdot \Pr_{x \leftarrow \distH_T}[x \geq p^H_T] +  \sum_{T \subseteq \items, p^L_T \leq \mathcal{B}} (1-q_T) p^L_T \cdot \Pr_{x \leftarrow \distH_T}[x \geq p^L_T]\\
	&+ \sum_{T \subseteq \items, p^H_T > \mathcal{B}} q_T p^H_T \cdot \Pr_{x \leftarrow \distH_T}[x \geq p^H_T] +  \sum_{T \subseteq \items, p^L_T > \mathcal{B}} (1-q_T) p^L_T \cdot \Pr_{x \leftarrow \distH_T}[x \geq p^L_T]
	\end{align*}
	By condition~\ref{thm:pricing:probUpperBound} of Theorem~\ref{thm:pricing}, we have
	$$\sum_{T \subseteq \items}q_T\cdot \Pr_{x \leftarrow \distH_T}[x \geq p^H_T] + (1-q_T)\cdot \Pr_{x \leftarrow \distH_T}[x \geq p^L_T] \leq 1$$
	Therefore, as all prices in the top sum above are at most $\mathcal{B}$, the entire top two terms sum to at most $\mathcal{B} = 4 \BREV$.
	
	For the bottom two terms, there is no term for $T$ if $p^H_T \leq \mathcal{B}$. If $p^H_T > \mathcal{B} \geq p^L_T$, define $p_T = p^H_T$.
	If $p^H_T > p^L_T > \mathcal{B}$, then set $p_T$ to whichever of $\{p^H_T,p^L_T\}$ maximizes $p_T\cdot \Pr_{x \leftarrow \distH_T}[x \geq p_T]$. Then $\sum_{T \subseteq \items, p^H_T > \mathcal{B}} p_T \cdot \Pr_{x \leftarrow \distH_T}[x \geq p_T]$ is at least as large as the bottom two terms above.
	Moreover, as all $p_T > \mathcal{B}$, we can apply Lemma~\ref{lem:stillokay} to conclude that for all $T_1,\ldots, T_k$ such that $T_i \not \subseteq \cup_{j \neq i} T_j$ for all $i$, $\SREV^* \geq 1/4 \sum_i p_{T_i} \Pr_{x \leftarrow \distH_{T_i}}[x \geq p_{T_i}]$.
	
	Finally, as $\{H_x\}_{x \in [k]}$ partitions the hyperedges so that for all $x$ and $T \in H_x$, $T \not \subseteq \cup_{T' \in H_x \setminus \{T\}} T'$, we get:
	$$\sum_{T \subseteq \items, p^H_T > \mathcal{B}} p_T \cdot \Pr_{x \leftarrow \distH_T}[x \geq p_T] = \sum_{x = 1}^k \sum_{T \in H_x, p^H_T > \mathcal{B}} p_T\cdot \Pr_{x \leftarrow \distH_T}[x \geq p_T] \leq 4k \cdot \SREV^*$$
	
	The last inequality is due to Lemma~\ref{lem:stillokay}, and completes the proof.
\end{proof}

So the last remaining task is to find a good partition of hyperedges, such that within each partition, every hyperedge contains at least one item not contained in the other hyperedges in the same partition. We isolate this contribution in Section~\ref{sec:partition} below.

\subsection{Partitioning Hyperedges with {Restricted} Complements}\label{sec:partition}
\begin{figure}[H]
	\MyFrame{
		$\mathrm{Partition\mbox{-}Edges}$\\
		\textbf{Input:} List of hyperedges, $E \subseteq 2^\items$.\\
		\textbf{Output:} A partition of $E$ into $\{H_x\}_x$ such that for all $x$ and all $T \in H_x$, $T \not \subseteq \cup_{T' \in H_x \setminus \{T\}} T'$.
		\begin{enumerate}
			\item $E_{\mathrm{curr}}\gets E$, $ i \gets 0$. \label{step:init}
			\item While $E_{\mathrm{curr}} \neq \emptyset$:\label{step:build_sets}
			\begin{enumerate}
				\item $i \gets i+1$
				\item $E_i\gets E_{\mathrm{curr}}$.
				\item For each $T\in E_i$ (in arbitrary order): If $T \subseteq \bigcup_{S\in E_i\setminus \{T\}}{S}$ Then  $E_i\gets E_i\setminus \{T\}$. \label{step:edgeViolation}
				\item $E_{\mathrm{curr}}\gets E_{\mathrm{curr}}\setminus E_i$.\label{step:remove}
			\end{enumerate}
			\item Return the partition $\{E_j\}_{j \in [i]}$.
		\end{enumerate}
	}
	\caption{An edge partitioning process.}
	\label{alg:edge_selection}
\end{figure}

{We provide a high-level description of our algorithm here, and give pseudocode in Figure~\ref{alg:edge_selection}. Recall that the algorithm takes as input a set of hyperedges, and returns a partition of the hyperedges $\{H_x\}_x$, so that in each partition $H_x$, every hyperedge $S \in H_x$ contains an item that is not in any other hyperedge $T \in H_x$. The algorithm iteratively constructs each $H_x$, and initially initializes $H_x$ to contain all remaining hyperedges. Then, it iteratively eliminates all ``bad'' hyperedges (those that \emph{don't} contain an item absent from the others) until the remaining hyperedges have the desired property. In the proof of Theorem~\ref{thm:edgeSelect} below, it is easy to show that the algorithm outputs a feasible partition, and the trick is guaranteeing that each iteration makes sufficient progress towards finalizing the partition.
}

\begin{theorem} \label{thm:edgeSelect}
For any set of hyperedges $E \subseteq 2^\items$, Algorithm~\ref{alg:edge_selection} returns a partition of $E = \{H_x\}_{x \in [k]}$ such that:
\begin{enumerate}
\item For all $x$, and all $T \in H_x$, $T \not \subseteq \cup_{T' \in H_x \setminus \{T\}} T'$.
\item $k \leq \max_{i}\{|\{T \in E : i \in T\}|\}$. 
\end{enumerate}
\end{theorem}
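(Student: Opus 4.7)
My plan is to address the two properties separately; property 1 will be essentially immediate, while property 2 (the bound on $k$) requires the main argument. For property 1, I will observe that during step (c) of a given iteration $j$, the set $E_j$ only shrinks. So if an edge $T$ survives its own processing step --- that is, at the moment of testing, $T\not\subseteq\bigcup_{S\in E_j\setminus\{T\}}S$ --- then the final $H_j$ is a subset of the $E_j$ present at that moment, so $\bigcup_{S\in H_j\setminus\{T\}}S\subseteq\bigcup_{S\in E_j\setminus\{T\}}S$, and hence $T\not\subseteq\bigcup_{S\in H_j\setminus\{T\}}S$ as well.

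The main technical step for property 2 will be the following key lemma: for any item $e\in\items$ and any iteration $j$, if \emph{some} edge present in $E_j$ at the start of iteration $j$ contains $e$, then \emph{some} edge of the final $H_j$ also contains $e$. I will prove this by contradiction using a ``last-removed'' witness. Suppose every $e$-containing edge of $E_j$ gets removed during step (c), and let $R^*$ be the one removed \emph{latest} in processing order. By that point, every other initially-present $e$-containing edge has already been deleted, so $R^*$ is the unique edge in the current $E_j$ containing $e$. But $R^*$ being removed means $R^*\subseteq\bigcup_{S\in E_j\setminus\{R^*\}}S$ for the current $E_j$, which forces $e$ to be covered by some other edge still present --- contradicting uniqueness.

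With the key lemma in hand, I will finish property 2 as follows. Take any $T\in H_k$; by property 1 there must exist an item $e_T\in T$ not appearing in any other edge of $H_k$. Since $T$ survives all the way to iteration $k$, it is still present in $E_j$ at the start of every iteration $j\le k$, so the hypothesis of the key lemma (with item $e_T$, witness edge $T$) holds for each such $j$. Applying the lemma yields, for each $j\le k$, an edge $S_j\in H_j$ with $e_T\in S_j$; in particular $S_k=T$. Because $H_1,\dots,H_k$ partition $E$, the edges $S_1,\dots,S_{k-1},T$ are $k$ distinct elements of $E$ all containing $e_T$, giving $|\{S\in E:e_T\in S\}|\ge k$ and hence $k\le\max_i|\{T\in E:i\in T\}|$. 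I expect the main obstacle to be pinning down the key lemma correctly, since $E_j$ is being mutated throughout step (c) and one must be careful about which snapshot of $E_j$ the set-inclusion condition refers to at each step; the ``last-removed'' idea is precisely what lets us avoid reasoning about the intermediate states.
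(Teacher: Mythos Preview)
Your proof is correct and takes essentially the same approach as the paper. The paper phrases the key lemma as the invariant $\bigcup_{T\in E_j}T=\bigcup_{T\in E^j_{\mathrm{curr}}}T$ (each removal preserves the union), whereas you prove the equivalent statement by a last-removed contradiction; and the paper concludes by counting forward (each iteration consumes at least one $i$-edge, so after $d$ rounds none remain), whereas you conclude by the contrapositive (an edge surviving to $H_k$ exhibits an item of degree $\ge k$) --- but these are cosmetic repackagings of the same argument.
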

\begin{proof}
	First, it is clear that the algorithm indeed properly outputs a partition of $E$: observe that due to line~\ref{step:remove}, when a hyperedge is permanently assigned to some $E_i$, it will not be assigned to any $E_{i'}$, which implies that all the $E_i$'s are disjoint. Also, every hyperedge is either permanently assigned to some $E_i$, or remains in $E_{\mathrm{curr}}$, which, by line~\ref{step:build_sets} implies that the algorithm terminates only when every hyperedge is permanently assigned to some $E_i$. So every hyperedge is contained in some partition, and the partitions are disjoint.
	
	That the output partition satisfies Property 1) is easy to verify: For any $x$, $T \in H_x$ only the check in~\ref{step:edgeViolation} passes for $T$ and (the present) $H_x$. Once the check passes, some other edges will be removed from $H_x$ before the output. Clearly, removing edge from $H_x$ cannot cause $T$ to all of a sudden be contained in $\cup_{T' \in H_x \setminus \{T\}} T'$ when it was previously not contained. So Property 1) is satisfied.
	
	To prove Property 2), first denote by $E_{\textrm{curr}}^i$ the state of $E_{\textrm{curr}}$ at the start of iteration $i$. We will show that $\cup_{T \in E_i} T = \cup_{T \in E_{\textrm{curr}}^i} T$. In other words, every element contained in some hyperedge in $E_{\textrm{curr}}^i$ is still contained in some hyperedge in $E_i$. To see this, observe that when $E_i$ is first set to $E^i_{\textrm{curr}}$, we clearly have $\cup_{T \in E_i} T  = \cup_{T \in E^i_{\textrm{curr}}} T$. The only time hyperedges are removed from $E_i$ is in step~\ref{step:edgeViolation}. Note that in order for a hyperedge to be removed from $E_i$, it must be the case that $T \subseteq \cup_{T' \in E_i \setminus \{T\}} T'$. In other words, in order to remove $T$ from $E_i$, it must be that all the elements contained in $T$ are also contained in $\cup_{T' \in E_i \setminus \{T\}} T'$. Therefore, removing $T$ does not change $\cup_{T' \in E_i} T'$, and when we terminate, we maintain $\cup_{T \in E_i} T = \cup_{T \in E^i_{\textrm{curr}}}T$.
	
	To see why this implies Property 2), note that the above implies that if for any $i$, $|\{T \in E, i \in T\}| = d$, then $i$ will be contained in at least one hyperedge in all of $E_1, \ldots, E_d$, and therefore no hyperedges containing $i$ remain in $E_{\textrm{curr}}^{d+1}$. In particular, for $d = \max_{i}\{|\{T \in E, i \in T\}|\}$, it's the case that for all $i$, no hyperedges containing $i$ remain in $E_{\textrm{curr}}^{d+1}$, and therefore the algorithm terminates with at most $d$ partitions.
\end{proof}

We can now combine everything to provide a proof of Proposition~\ref{prop:single}:

\begin{proof}[Proof of Proposition~\ref{prop:single}]
Combining Theorem~\ref{thm:edgeSelect} with Corollary~\ref{cor:stillokay}, we get that whenever $\dist$ has complementarity $d$, that $4d\SREV^* + 4\BREV \geq \SINGLE$, completing the proof.
\end{proof}

\section{Bounding \NONFAV} \label{sec:non-favorite}
In this section, we bound $\NONFAV$ using similar ideas to those developed in~\cite{cai2016duality}. Much of the process will look familiar to experts famliar with~\cite{rubinstein2015simple, cai2016duality}, but there are a couple of new ideas sprinkled in. We begin by breaking $\NONFAV$ into $\CORE + \TAIL$, as is by now standard ($\cutoff$ will be chosen later). Omitted proofs appear in Appendix~\ref{app:proof:nonfav}.

\begin{lemma}\label{lem:coretail}
$\NONFAV$ is upper bounded by the following:	
\begin{align*} 
&\expect{\val \leftarrow \dist}{\max_{S \in \feas}\{\sum_{T \subseteq S} \hrep(T)\cdot \ind \left[\hrep(T) \leq \cutoff \right]\}} + && (\CORE) \nonumber \\
& \expect{\val \leftarrow \dist}{\sum_{S : \hrep(S) > \cutoff}\hrep(S)\cdot \ind\left[\val \notin R_S \right]} && (\TAIL)
\end{align*}
\end{lemma}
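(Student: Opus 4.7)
The plan is to bound $\NONFAV$ by splitting each contribution $\hrep(T)\cdot\ind[\val\notin R_T]$ inside the maximum according to whether $\hrep(T)\leq\cutoff$ (a ``core'' contribution) or $\hrep(T)>\cutoff$ (a ``tail'' contribution), and then using the subadditivity of $\max$ to separate the two sums so that each can be bounded by one of the two target quantities. Concretely, writing $1 = \ind[\hrep(T)\leq \cutoff]+\ind[\hrep(T)>\cutoff]$ inside the sum and applying $\max_{S\in\feas}(A(S)+B(S))\leq \max_{S\in\feas}A(S)+\max_{S\in\feas}B(S)$ gives, pointwise in $\val$,
\[
\max_{S\in\feas}\sum_{T\subseteq S}\hrep(T)\ind[\val\notin R_T]\leq \max_{S\in\feas}\sum_{T\subseteq S}\hrep(T)\ind[\val\notin R_T]\ind[\hrep(T)\leq\cutoff]+\max_{S\in\feas}\sum_{T\subseteq S}\hrep(T)\ind[\val\notin R_T]\ind[\hrep(T)>\cutoff].
\]
Taking expectations, it then suffices to bound the first expectation by $\CORE$ and the second by $\TAIL$.

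For the first (core) term I would simply drop the indicator $\ind[\val\notin R_T]\leq 1$, leaving exactly the expression inside $\CORE$. For the second (tail) term, the observation is that every summand is non-negative and that $T\subseteq S$ for some $S\in\feas$ iff $T\in\feas$ (since $\feas$ is downwards-closed); but by the paper's w.l.o.g.\ assumption $\hrep(T)=0$ whenever $T\notin\feas$, so we can drop the constraint $T\subseteq S$ altogether and simply sum over all $T$ with $\hrep(T)>\cutoff$:
\[
\max_{S\in\feas}\sum_{T\subseteq S}\hrep(T)\ind[\val\notin R_T]\ind[\hrep(T)>\cutoff]\leq \sum_{T:\hrep(T)>\cutoff}\hrep(T)\ind[\val\notin R_T],
\]
which is exactly the integrand of $\TAIL$. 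Taking expectations and adding the two bounds yields the claim. There is no real obstacle here; the only subtle point is to check that splitting by the threshold $\cutoff$ is compatible with the maximum over feasible sets, and for this the $\max(A+B)\leq \max A+\max B$ step is what does the work.
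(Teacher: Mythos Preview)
Your proof is correct and follows essentially the same approach as the paper's: split each $\hrep(T)\cdot\ind[\val\notin R_T]$ via $1=\ind[\hrep(T)\leq\cutoff]+\ind[\hrep(T)>\cutoff]$, use subadditivity of the max to separate the two pieces, drop $\ind[\val\notin R_T]$ in the core term, and relax the $\max_{S\in\feas}\sum_{T\subseteq S}$ to a full sum in the tail term. The only cosmetic difference is that for the tail bound the paper simply uses nonnegativity to pass from $\max_{S\in\feas}\sum_{T\subseteq S}$ to $\sum_T$, whereas you additionally invoke the w.l.o.g.\ assumption $\hrep(T)=0$ for $T\notin\feas$; either justification works.
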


\paragraph{Bounding \CORE.}
Our main approach to bound $\CORE$ is to apply the same concentration bound of Schechtman~\cite{schechtman2003concentration} used in~\cite{rubinstein2015simple}. Essentially, we just have to show that our valuation functions are ``subadditive over independent items,'' for the appropriate definition of ``items'' (which happens to be hyperedges). It's perhaps not obvious that our valuation functions are subadditive over independent ``items,'' but indeed they are. 

Let's first recall the definition of subadditive over independent items. In the definition below, we intentionally write $N$ instead of $\items$ to denote the set of items, as the ``items'' in the definition may be different than the items for sale.

\begin{definition} \label{def:independentHyperedges} A distribution $\dist$ over valuation functions $\val : 2^{N} \rightarrow \reals$ is {\em subadditive over independent items} if the following conditions hold:
    \begin{enumerate}
        \item {\em No externalities and independence across items}:
        For every item $i$, let $\Omega_i$ be a compact subset of a normed space (i.e., $\Omega_i = [0,1]$). 
        There exists a product distribution $\dist'$ over $\times_{i \in N} \Omega_i$ (that is, $\dist' = \prod_{i \in N}\dist'_i$), and a collection of deterministic functions $V_{\feasSet} :\times_{i\in \feasSet}\Omega_i \rightarrow \reals$
        such that a sample $\val$ from $\dist$ can be drawn by sampling $\vec{x}\leftarrow \dist'$, and defining $\val(S) = V_S(\vec{x}_S)$.
        \item {\em Monotonicity}:
        Every $\val$ in the support of $\dist$ is monotone, i.e., $\val(\feasSet) \leq \val(\feasSet')$ for every $\feasSet \subseteq \feasSet'$.
        \item {\em Subadditivity}:
        Every $\val$ in the support of $\dist$ is subadditive, i.e., $\val(\feasSet \cup \feasSet') \leq \val(\feasSet') +\val(\feasSet')$, for all $S, S'$.
    \end{enumerate}
\end{definition}

\begin{definition} \label{def:lipschitz}
    Let $\dist$ denote a distribution over valuation functions, and $\dist'$ denote the product distribution and $\{V_S(\cdot)\}$ the deterministic functions that witness $\dist$ as subadditive over independent items. Then $\dist$ is $c$-{\em Lipschitz} if for all $\vec{x}, \vec{y}$, and sets of items $S, T$, we have:
    \begin{align*}
        \cardinality{V_S(\vec{x}_S) - V_T(\vec{y}_T)}\leq c\cdot \left(\cardinality{X \cup Y} - \cardinality{X \cap Y} + \cardinality{\{i \in X \cap Y : x_i \neq y_i  \} } \right)
    \end{align*}
\end{definition}
We use the following lemma and corollary (of a concentration inequality due to Schechtman~\cite{schechtman2003concentration}) from~\cite{rubinstein2015simple} (the bound in Corollary~\ref{cor:schechtman} is slightly improved from~\cite{rubinstein2015simple}, so we include a proof in Appendix~\ref{app:proof:nonfav}):
\begin{lemma}(\cite{rubinstein2015simple}) Let $\dist$ be a distribution that is subadditive over independent hyperedges, where for each hyperedge $T$, $\val(\{T\}) \in [0,c]$ with probability $1$. Then $\dist$ is $c$-Lipschitz.
\end{lemma}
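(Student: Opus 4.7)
The plan is to bound each direction of $|V_S(\vec{x}_S) - V_T(\vec{y}_T)|$ separately, using only monotonicity, subadditivity, and the hypothesis that $V_{\{i\}}(x_i) \leq c$ almost surely. First, I decompose $S = (S \setminus T) \sqcup (S \cap T)$ and further partition the common coordinates $S \cap T = C_{=} \sqcup C_{\neq}$ according to whether $x_i = y_i$ or $x_i \neq y_i$. The point of this decomposition is that $|(S \setminus T) \cup C_{\neq}|$ matches exactly the right-hand side of the Lipschitz inequality for the direction $V_S - V_T$, and symmetrically for the opposite direction.

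For the direction $V_S(\vec{x}_S) - V_T(\vec{y}_T)$, I would first note that $V_{C_{=}}(\vec{x}_{C_{=}}) = V_{C_{=}}(\vec{y}_{C_{=}})$ by the definition of $C_{=}$, and that $V_{C_{=}}(\vec{y}_{C_{=}}) \leq V_T(\vec{y}_T)$ by monotonicity (here the no-externalities condition of Definition~\ref{def:independentHyperedges} ensures the comparison is well-defined, since $V_{C_{=}}$ depends only on the relevant coordinates). Then, writing $S = C_{=} \sqcup \bigl((S \setminus T) \cup C_{\neq}\bigr)$ and applying subadditivity once on the realization $\vec{x}$ gives
\[
V_S(\vec{x}_S) \;\leq\; V_{C_{=}}(\vec{x}_{C_{=}}) + V_{(S \setminus T) \cup C_{\neq}}\bigl(\vec{x}_{(S \setminus T) \cup C_{\neq}}\bigr),
\]
so combining these inequalities yields $V_S(\vec{x}_S) - V_T(\vec{y}_T) \leq V_{(S \setminus T) \cup C_{\neq}}\bigl(\vec{x}_{(S \setminus T) \cup C_{\neq}}\bigr)$. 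Iterating subadditivity to split this further into singletons and using $V_{\{i\}}(x_i) \leq c$ almost surely, I obtain $V_S(\vec{x}_S) - V_T(\vec{y}_T) \leq c \cdot (|S \setminus T| + |C_{\neq}|)$. The symmetric argument (swapping the roles of $(S, \vec{x})$ and $(T, \vec{y})$) gives $V_T(\vec{y}_T) - V_S(\vec{x}_S) \leq c \cdot (|T \setminus S| + |C_{\neq}|)$, and the two combine to
\[
|V_S(\vec{x}_S) - V_T(\vec{y}_T)| \;\leq\; c \cdot \bigl(|S \triangle T| + |C_{\neq}|\bigr) \;=\; c \cdot \bigl(|S \cup T| - |S \cap T| + |C_{\neq}|\bigr),
\]
which is exactly the Lipschitz condition required by Definition~\ref{def:lipschitz}.

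I expect the main obstacle to be purely bookkeeping rather than conceptual: each use of monotonicity and subadditivity must be applied to realizations whose coordinates line up correctly on the overlap $C_{=}$, and the no-externalities property has to be invoked consistently to treat $V_S$ as depending only on the coordinates in $S$. Beyond these notational subtleties, the lemma is a direct structural consequence of Definition~\ref{def:independentHyperedges} combined with the per-coordinate boundedness $V_{\{i\}}(x_i) \in [0,c]$, and no probabilistic or algorithmic ingredient is needed.
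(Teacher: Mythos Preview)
Your proposal is correct and is essentially the standard argument (from \cite{rubinstein2015simple}) that the paper is citing: decompose $S$ and $T$ into their symmetric difference and the agreeing/disagreeing parts of the intersection, then use monotonicity to bound $V_{C_=}(\vec y_{C_=})\le V_T(\vec y_T)$ and subadditivity to peel off the remaining coordinates one at a time, each contributing at most $c$. The paper itself does not supply an independent proof of this lemma---it simply imports it from \cite{rubinstein2015simple}---so there is nothing further to compare; your write-up is exactly what a self-contained proof in this paper would look like.
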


\begin{corollary}\label{cor:schechtman}(\cite{rubinstein2015simple})
    Suppose that $\dist$ is a distribution that is subadditive over independent hyperedges and $c$-Lipschitz, if $a$ is the median of $\val(N)$, then
    $\expect{}{\val(N)} \leq 3a +c\cdot (2+1/\ln 2)$
\end{corollary}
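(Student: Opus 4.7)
The plan is to invoke Schechtman's concentration inequality for subadditive Lipschitz functions and then convert the resulting tail bound on $\val(N)$ into an expectation bound via integration. Because $\dist$ is subadditive over independent hyperedges and $c$-Lipschitz with median $a$, Schechtman's inequality (as invoked in~\cite{rubinstein2015simple}) yields an exponentially decaying one-sided tail of the form $\Pr[\val(N) > \alpha a + \beta_k c] \leq 2^{-k}$ for an appropriate constant $\alpha$ and function $\beta_k$ growing linearly in $k$. The previous lemma even hands us the Lipschitz hypothesis for free once we view each hyperedge as an ``independent item'' in the sense of Definition~\ref{def:independentHyperedges}, so there is nothing new to verify in order to apply Schechtman.

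Starting from $\expect{}{\val(N)} = \int_0^\infty \Pr[\val(N) > t]\,dt$, I would split this integral at a threshold chosen to be a small multiple of $a$, so that the ``$3a$'' term in the bound essentially comes from the trivial $\Pr[\cdot] \leq 1$ estimate on the low range. For the upper range, I would plug in Schechtman's tail bound and integrate. The key to obtaining a sharper constant than the discrete geometric-series estimate used by Rubinstein--Weinberg is to treat the far tail continuously: after substituting $t = t_0 + c s$, the integral becomes $c \int_0^\infty \Pr[\val(N) > t_0 + cs]\,ds \leq c \int_0^\infty 2^{-s}\,ds = c/\ln 2$, which is exactly where the $1/\ln 2$ factor in the final bound comes from. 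The extra $2c$ in the final statement should come from the handful of discrete terms near the handoff, which have to be summed as a geometric series of step-size $O(c)$ rather than integrated.

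Putting the three contributions together --- roughly $3a$ from the low range, $2c$ from the discrete ``near-median'' transition, and $c/\ln 2$ from the continuous far-tail integral --- yields the claimed $3a + c(2 + 1/\ln 2)$. The main obstacle I expect is purely bookkeeping: aligning the constants in Schechtman's tail bound with the threshold choice and substitution so that the $2^{-k}$ decay lines up with a clean $2^{-s}$ decay in the continuous variable, and verifying that the handoff between the trivial estimate and the tail bound truly loses at most an additive $2c$. Apart from this constant-tracking, the argument is just the standard trick of turning a tail bound into an expectation bound, and the robustness of Schechtman's inequality to the hyperedge reparameterization does all the conceptual work.
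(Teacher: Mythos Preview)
Your approach matches the paper's: invoke Schechtman's tail bound and integrate $\int_0^\infty \Pr[\val(N)>x]\,dx$, splitting where the bound becomes nontrivial. One small correction to your bookkeeping picture: the $2c$ does \emph{not} come from any discrete geometric-series handoff. The form of Schechtman's inequality used is $\Pr[\val(N)\geq 3a+kc]\leq \min\{1,\,4\cdot 2^{-k}\}$, and the factor of $4$ means the bound only drops below $1$ at $k=2$, i.e.\ at $x=3a+2c$. So the split point is $3a+2c$, the trivial estimate $\Pr[\cdot]\leq 1$ on $[0,3a+2c]$ contributes $3a+2c$, and the continuous integral $\int_{3a+2c}^\infty 4\cdot 2^{(3a-x)/c}\,dx$ contributes exactly $c/\ln 2$; there are no discrete terms at all.
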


Finally, we just need to relate $\CORE$ to a random variable that is subadditive over independent items.

\begin{lemma}\label{lem:subadditive}
$CORE$ is the expectation of a random variable $\val_{\CORE}(N)$, where $\val_{\CORE}(\cdot)$ is $\cutoff$-lipschitz and subadditive over independent items $N = 2^\items$. Moreover, $\val_{\CORE}(N)$ is stochastically dominated by $\val(\items)$. 
\end{lemma}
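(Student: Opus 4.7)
The plan is to exhibit $\val_{\CORE}$ explicitly and then verify the three requirements (subadditive over independent items, $\cutoff$-Lipschitz, stochastic domination) in turn.

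Define the underlying product distribution by letting the ``items'' be the hyperedges $U \in N = 2^\items$, and for each such $U$ associate the random variable $x_U := \hrep(U)\cdot \ind[\hrep(U) \leq \cutoff] \in [0,\cutoff]$. Since the original $\hrep(U)$ are independent across $U$ under $\dist$, so are the $x_U$'s. For a subset $S \subseteq N$, define
\[
\val_{\CORE}(S) \;=\; \max_{T \in \feas}\Big\{\sum_{U \in S,\,U \subseteq T} x_U\Big\}.
\]
Plugging in $S = N$ recovers the CORE integrand, so $\CORE = \expect{}{\val_{\CORE}(N)}$.

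Next I would verify the three conditions of Definition~\ref{def:independentHyperedges}. No externalities and independence are immediate from the definition of $\val_{\CORE}(S)$, which only reads $\{x_U : U \in S\}$, and the independence of the $x_U$'s just noted. Monotonicity is clear because enlarging $S$ only adds non-negative summands inside the max. Subadditivity follows from the standard ``$\max$ of sums'' trick: letting $T^*$ achieve the max for $\val_{\CORE}(S \cup S')$,
\[
\val_{\CORE}(S\cup S') = \!\!\sum_{U \in S\cup S',\,U\subseteq T^*}\!\! x_U \;\leq\; \!\!\sum_{U \in S,\,U\subseteq T^*}\!\! x_U \;+\; \!\!\sum_{U \in S',\,U\subseteq T^*}\!\! x_U \;\leq\; \val_{\CORE}(S) + \val_{\CORE}(S').
\]

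For the Lipschitz claim, I invoke the lemma quoted just above: it suffices to show that $\val_{\CORE}(\{U\}) \in [0,\cutoff]$ for every hyperedge $U$. If there exists $T \in \feas$ with $U \subseteq T$, then $\val_{\CORE}(\{U\}) = x_U \leq \cutoff$ by construction. If no such $T$ exists, then the assumption $\hrep(U) = 0$ for $U \notin \feas$ (from Section~\ref{sec:prelim}) gives $x_U = 0$ and $\val_{\CORE}(\{U\}) = 0$; either way we are in $[0,\cutoff]$.

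Finally, for stochastic domination, I would show the pointwise bound $\val_{\CORE}(N) \leq \val(\items)$: for any realization of $\hrep$,
\[
\val_{\CORE}(N) = \max_{T \in \feas}\sum_{U \subseteq T}\hrep(U)\cdot\ind[\hrep(U)\leq \cutoff] \;\leq\; \max_{T \in \feas}\sum_{U \subseteq T}\hrep(U) = \val(\items),
\]
where the final equality uses $T \subseteq \items$ for all $T \in \feas$. Pointwise domination implies stochastic domination.

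None of these steps is genuinely hard; the only mild subtlety is keeping straight that the ``items'' in Definition~\ref{def:independentHyperedges} are hyperedges here rather than the original items in $\items$, which is exactly what allows the independence condition to hold (the $\hrep(U)$'s are independent across $U$, whereas the contributions of a single \emph{item} $i \in \items$ appear in many hyperedges and would not decouple).
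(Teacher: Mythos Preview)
Your proof is correct and follows essentially the same approach as the paper: treat hyperedges as the ``items,'' truncate each $\hrep(U)$ at $\cutoff$, define $\val_{\CORE}$ as a max over $\feas$ of the resulting sums, and verify the three properties plus pointwise domination. The only cosmetic difference is that the paper introduces an auxiliary downward-closed system $\feas' \subseteq 2^{2^\items}$ (a set of hyperedges is in $\feas'$ iff their union lies in some $C \in \feas$) and writes $V_S(\vec{x}_S) = \max_{S' \subseteq S,\,S' \in \feas'}\sum_{T \in S'} x_T$, proving subadditivity via downward-closedness of $\feas'$; your formulation $\max_{T \in \feas}\sum_{U \in S,\,U\subseteq T} x_U$ is equivalent and lets you argue subadditivity directly.
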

\begin{proof}
	Let the ``items'' $N = 2^\items$. Let the distributions $\hat{\dist}_T = \distH_T\cdot \ind[w(T) \leq \cutoff]$ (that is, a random variable drawn from $\hat{\dist}_T$ can be coupled with the random variable $w(T) \cdot \ind[w(T) \leq \cutoff]$). Define constraints $\feas' \subseteq 2^N$ ($=2^{2^\items}$) so that a subset $U$ of $2^\items$ is in $\feas'$ if and only if there exists a set $C \in \feas$ with $\cup_{T \in U} T \subseteq C$. In other words, $U \in \feas'$ if and only if the union of elements of $U$ is contained in some set in $\feas$. Finally, define $V_U(\vec{x}_U) = \max_{U' \subseteq U, U' \in \feas'}\{\sum_{T \in U} x_T\}$.
	
	It is easy to see that $\val_{\CORE}(\cdot)$ has no externalities and independent items. It is also easy to see that $\val_{\CORE}(\cdot)$ is monotone. Finally, we'll prove that $\val_{\CORE}(\cdot)$ is subadditive by observing that $\feas'$ is downwards-closed. To see this, simply observe that if $U' \subseteq U$, and $\cup_{T \in U} T \subseteq C$, then clearly $\cup_{T \in U'} T \subseteq C$. So if $C \in \feas$ witnesses that $U \in \feas'$ and $U' \subseteq U$, then $C$ also witnesses that $U' \in \feas'$.
	
	Now that $\feas'$ is downwards closed, it's easy to see (and well-known) that $\val_{\CORE}$ is subadditive: For any $U, W$, let $X = \argmax_{X' \subseteq U \cup W, X \in \feas'} \{\sum_{T \in X} x_T\}$. Then let $U' = X \cap U$, and $W' = X \cap W$. Clearly, $\sum_{T \in X} x_T \leq \sum_{T \in U'} x_T + \sum_{T \in W'} x_T$. As $\feas'$ is downwards closed, $U' \in \feas'$ and $W' \in \feas'$. Therefore, $\val_{\CORE}(W) + \val_{\CORE}(U) \geq \sum_{T \in U'} x_T + \sum_{T \in W'} x_T \geq \sum_{T \in X} x_T = \val_{\CORE}(U \cup W)$, and $\val_{\CORE}(\cdot)$ is subadditive.
	
	So finally, we just have to show that $\val_{\CORE}(N)$ is stochastically dominated by $\val(\items)$. Couple the random variable $x_T$ drawn from $\hat{\dist}_T$ so that $x_T = \hrep(T)\cdot \ind[\hrep(T) \leq \cutoff]$. Now consider $U^* = \argmax_{U \subseteq 2^\items, U \in \feas'}\{\sum_{T \in U} x_T\}$. Then we have $\val_{\CORE}(N) = \sum_{T \in U^*}x_T$. By definition of $\feas'$, there exists some $C \in \feas$ such that $T \subseteq C$ for all $T \in U^*$. Therefore:
	\begin{align*}
	\val_{\CORE}(N) = \sum_{T \in U^*} x_T &\leq \sum_{T \subseteq C} x_T & &\\
	&\leq \sum_{T \subseteq C} \hrep(T) & &\text{ (because $x_T \leq \hrep(T)$)}\\
	&\leq \max_{S \subseteq M, S \in \feas}\{\sum_{T \subseteq S} \hrep(T)\} & &\text{ (because $C \in \feas$)}\\
	& = \val(\items).
	\end{align*}
	
	So when $x_T$ and $\hrep(T)$ are coupled in this way, we have $\val_{\CORE}(N) \leq \val(\items)$, and therefore $\val(\items)$ stochastically dominates $\val_{\CORE}(N)$.
\end{proof}

Now, Lemma~\ref{lem:subadditive} combined with Corollary~\ref{cor:schechtman} essentially says that $3\cdot\val(\items)$ exceeds $\CORE - \cutoff\cdot(2+1/\ln 2)$ with probability at least $1/2$, allowing us to conclude with the following proposition:
\begin{proposition} \label{prop:core}
$       \CORE
        \leq
        6\BREV + \cutoff\cdot(2+1/\ln 2).
$
\end{proposition}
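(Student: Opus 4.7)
The plan is to combine the two ingredients already set up: Lemma~\ref{lem:subadditive}, which exhibits $\CORE$ as the expectation of a random variable $\val_{\CORE}(N)$ that is subadditive over independent items, $\cutoff$-Lipschitz, and stochastically dominated by $\val(\items)$; and Corollary~\ref{cor:schechtman}, the Schechtman-style concentration bound that relates the expectation of such a random variable to its median.

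First I would apply Corollary~\ref{cor:schechtman} to $\val_{\CORE}(N)$ with Lipschitz constant $c = \cutoff$. Let $a$ denote the median of $\val_{\CORE}(N)$. The corollary immediately gives
\[
\CORE \;=\; \expect{}{\val_{\CORE}(N)} \;\leq\; 3a + \cutoff\cdot(2 + 1/\ln 2).
\]
So the only remaining task is to bound the median $a$ by $2\BREV$.

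For this step I would use the stochastic dominance portion of Lemma~\ref{lem:subadditive}. Since $\val(\items)$ stochastically dominates $\val_{\CORE}(N)$, the median $a'$ of $\val(\items)$ satisfies $a \leq a'$. Now bundling with posted price $a'$ sells with probability at least $1/2$ (by definition of the median), so $\BREV \geq a'/2$, and hence $a \leq a' \leq 2\BREV$. Substituting back yields
\[
\CORE \;\leq\; 3 \cdot 2\BREV + \cutoff\cdot(2 + 1/\ln 2) \;=\; 6\BREV + \cutoff\cdot(2 + 1/\ln 2),
\]
as desired.

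There is no real obstacle here; all the technical work has been absorbed into Lemma~\ref{lem:subadditive} (verifying subadditivity over independent items and stochastic domination) and into the concentration bound of Corollary~\ref{cor:schechtman}. The only thing to be careful about is that the relevant ``items'' for the subadditivity structure are the hyperedges $N = 2^\items$ (not the original items $\items$), so the Lipschitz constant that appears is the truncation threshold $\cutoff$ on hyperedge weights and not anything scaling with $|\items|$; this is exactly what Lemma~\ref{lem:subadditive} delivers.
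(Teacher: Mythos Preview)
Your proposal is correct and follows essentially the same approach as the paper. The only cosmetic difference is that the paper applies stochastic dominance directly at the value $a$ (concluding $\Pr[\val(\items)\geq a]\geq 1/2$ and hence $\BREV\geq a/2$) rather than passing through the median $a'$ of $\val(\items)$, but this is the same argument.
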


\begin{proof}
    Let $a$ be the median of the random variable $\val_{\CORE}(N)$. Then $\Pr[\val_{\CORE}(N) \geq a] = 1/2$. As $\val(\items)$ stochastically dominates $\val_{\CORE}(N)$, we have $\Pr[\val(\items) \geq a] \geq 1/2$. Moreover, by Corollary~\ref{cor:schechtman}, the fact that $\CORE = \mathbb{E}[\val_{\CORE}(N)]$, and that $\val_{\CORE}$ is $\cutoff$-lipschitz and subadditive over independent items, we have:

    $$\CORE \leq 3a + \cutoff(2+1/\ln 2).$$

    Moreover, as $\Pr[\val(\items) \geq a] \geq 1/2$, we have:

    $$\BREV \geq a/2.$$

    Combining the two above equations proves the proposition.
\end{proof}

\paragraph{Bounding \TAIL.}
Our approach to bound $\TAIL$ is again similar to~\cite{cai2016duality}. 
We begin by rewriting $\TAIL$ using linearity of expectation {and the fact that the hypergraph representation $\hrep$ of valuation $\val$ is drawn from $\distH$ which is a product distribution:}
\begin{align}
\TAIL
= &
\expect{\val \leftarrow \dist}{\sum_{T\subseteq \items, \hrep(T) > \cutoff}\hrep(T) \cdot\ind\left[\val \notin R_T\right]} 
= 
\expect{\val \leftarrow \dist}{\sum_{T \subseteq \items, \hrep(T) > \cutoff}{\hrep(T) \cdot \ind\left[\exists T', \hrep(T') > \hrep(T)\right] }} & &
\nonumber \\
= &
\sum_{T \subseteq \items}{\expect{\val \leftarrow \dist}{\hrep(T) \cdot \ind\left[\hrep(T) > \cutoff \wedge \val \notin R_T\right]}}  \text{ ~~~~~~~~~~~~~~~~~~(by linearity of expectation)} &\nonumber\\
= &
\sum_{T \subseteq \items}\sum_{x > \cutoff, f_T(x) > 0} x \cdot f_T(x) \cdot \Pr_{\dist_{-T}}\left[\exists T',  \hrep(T') > x\right]  \text{ ~~~~~~~~(by independence across hyperedges)} & \nonumber 
\end{align}

From here, we use essentially the same lemma from~\cite{cai2016duality}. We have replaced their $\SREV$ with $\BREV$, but the proof is identical. 
\begin{lemma}[\cite{cai2016duality}]\label{lem:cdwtail}
For all $x, T$, $x \cdot \Pr_{\hrep \gets \distH_{-T}}[\exists T', \hrep(T') > x] \leq \BREV$.
\end{lemma}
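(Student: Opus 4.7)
The plan is to show that the event appearing on the left hand side is dominated by the event that the grand bundle is worth more than $x$, which is the natural event governing $\BREV$ via its pricing characterization. The key observation is that the model's w.l.o.g.\ assumption ($\hrep(T') = 0$ whenever $T' \notin \feas$) makes every hyperedge with positive weight feasible, so each such hyperedge contributes a lower bound to $\val(\items)$.

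Concretely, I would first argue that for any realization of $\hrep$ in which some $T' \neq T$ satisfies $\hrep(T') > x > 0$, we necessarily have $T' \in \feas$, and therefore $\val(\items) = \max_{S \in \feas}\sum_{U \subseteq S} \hrep(U) \geq \hrep(T') > x$, since $\hrep \geq 0$ and $T' \subseteq \items$ with $T' \in \feas$. Pointwise this gives
\[
\ind[\exists T' \neq T,\ \hrep(T') > x]\ \leq\ \ind[\val(\items) > x].
\]
Taking expectation and using that the event on the left depends only on the coordinates $\hrep_{-T}$ (so its probability under $\distH$ and under the marginal $\distH_{-T}$ agree), yields
\[
\Pr_{\hrep \gets \distH_{-T}}[\exists T',\ \hrep(T') > x]\ \leq\ \Pr_{\hrep \gets \distH}[\val(\items) > x].
\]

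Multiplying both sides by $x$ and invoking the pricing characterization $\BREV = \max_{p} p \cdot \Pr[\val(\items) \geq p]$, which in particular gives $\BREV \geq x \cdot \Pr[\val(\items) > x]$, we conclude
\[
x \cdot \Pr_{\hrep \gets \distH_{-T}}[\exists T',\ \hrep(T') > x]\ \leq\ x \cdot \Pr_{\hrep \gets \distH}[\val(\items) > x]\ \leq\ \BREV,
\]
as desired. The only subtlety worth spelling out is the w.l.o.g.\ feasibility of any $T'$ with $\hrep(T') > 0$, which is precisely what lets the existence of a single large hyperedge weight beat the bundle price; apart from that, the argument is a direct one-line comparison of events. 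There is no substantive obstacle here beyond that observation.
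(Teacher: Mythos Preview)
Your proposal is correct and follows essentially the same route as the paper: set price $x$ on the grand bundle, observe that the existence of any hyperedge $T'$ with $\hrep(T')>x$ forces $\val(\items)\geq x$, and conclude via the pricing characterization of $\BREV$. Your write-up is in fact more careful than the paper's, since you explicitly invoke the w.l.o.g.\ assumption that $\hrep(T')=0$ for $T'\notin\feas$ to justify $\val(\items)\geq\hrep(T')$, and you spell out why the probability under $\distH_{-T}$ coincides with that under $\distH$.
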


\begin{proposition}\label{prop:tail}
$\TAIL \leq \left(\sum_{T \subseteq \items} \Pr[\hrep(T) > \cutoff]\right) \cdot \BREV$.
\end{proposition}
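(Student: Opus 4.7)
The plan is essentially a direct application of Lemma~\ref{lem:cdwtail} to the rewritten form of $\TAIL$ already derived just before the proposition statement. Starting from
\[
\TAIL = \sum_{T \subseteq \items}\sum_{\substack{x > \cutoff \\ f_T(x) > 0}} x \cdot f_T(x) \cdot \Pr_{\hrep \gets \distH_{-T}}\!\left[\exists T',\, \hrep(T') > x\right],
\]
I would, for each fixed $T$ and each $x > \cutoff$ in the support of $\distH_T$, bound the factor $x \cdot \Pr_{\hrep \gets \distH_{-T}}[\exists T',\, \hrep(T') > x]$ by $\BREV$ using Lemma~\ref{lem:cdwtail}. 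This gives
\[
\TAIL \leq \sum_{T \subseteq \items}\sum_{\substack{x > \cutoff \\ f_T(x) > 0}} f_T(x) \cdot \BREV.
\]

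Next I would pull $\BREV$ out of the sum and recognize the inner sum $\sum_{x > \cutoff,\, f_T(x) > 0} f_T(x)$ as exactly $\Pr_{\hrep(T) \gets \distH_T}[\hrep(T) > \cutoff]$, yielding
\[
\TAIL \leq \BREV \cdot \sum_{T \subseteq \items} \Pr[\hrep(T) > \cutoff],
\]
which is the desired bound.

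There is essentially no obstacle here: all of the real work has been absorbed into Lemma~\ref{lem:cdwtail} (imported from~\cite{cai2016duality}) and into the preceding rewriting of $\TAIL$ via linearity of expectation and independence across hyperedges. The only minor care needed is to confirm the sum over $x$ ranges exactly over the support of $\distH_T$ restricted to values exceeding $\cutoff$, so that collapsing it gives the tail probability and not some other quantity; this is immediate from the discrete-distribution convention adopted in the preliminaries.
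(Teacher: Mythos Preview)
Your proposal is correct and follows essentially the same approach as the paper's own proof: apply Lemma~\ref{lem:cdwtail} termwise to the rewritten expression for $\TAIL$, then collapse the inner sum over $x > \cutoff$ into $\Pr[\hrep(T) > \cutoff]$. The paper's proof is in fact just a one-line version of what you wrote.
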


\paragraph{Setting the Cutoff.}
Finally, we just need an appropriate choice of $\cutoff$. 
We'll choose to set $\cutoff$ such that $\sum_{T \subseteq \items} \Pr[\hrep(T) > \cutoff] = k$ for the appropriate choice of $k$. 
We first show how to relate $\cutoff$ to $\BREV$. Lemma~\ref{lem:prob} below is well-known, but we provide a proof in Appendix~\ref{app:proof:nonfav} for completeness.

\begin{lemma}\label{lem:prob}
Let $E_1,\ldots, E_k$ be independent events such that $\sum_{i} \Pr[E_i] = k$. Then $\Pr[\cup_i E_i] \geq 1-e^{-k}$.
\end{lemma}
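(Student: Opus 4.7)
The plan is to use the standard inclusion-exclusion style computation together with the elementary inequality $1 - x \leq e^{-x}$, which holds for all real $x$. This is a textbook argument, so the main ``obstacle'' is really just pointing out the right inequality to apply; there is no new idea required beyond independence of the events.

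First, I would rewrite the probability of the union using De Morgan and the independence assumption:
\[
\Pr\!\left[\bigcup_i E_i\right] = 1 - \Pr\!\left[\bigcap_i \overline{E_i}\right] = 1 - \prod_i \bigl(1 - \Pr[E_i]\bigr),
\]
where the second equality uses independence of the complementary events $\overline{E_i}$ (which follows from independence of the $E_i$).

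Next, I would apply the pointwise inequality $1 - x \leq e^{-x}$, valid for all $x \in \reals$, to each factor $1 - \Pr[E_i]$. This yields
\[
\prod_i \bigl(1 - \Pr[E_i]\bigr) \leq \prod_i e^{-\Pr[E_i]} = \exp\!\left(-\sum_i \Pr[E_i]\right) = e^{-k},
\]
using the hypothesis that the probabilities sum to $k$ in the last step.

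Combining the two displays gives $\Pr[\bigcup_i E_i] \geq 1 - e^{-k}$, as required. The argument is agnostic to the number of events; only the sum of the probabilities and independence are used, which is exactly why this lemma is the right tool for selecting $\cutoff$ via $\sum_{T \subseteq \items} \Pr[\hrep(T) > \cutoff] = k$ when combining Propositions~\ref{prop:core} and~\ref{prop:tail}.
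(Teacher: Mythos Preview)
Your proof is correct and is in fact more direct than the paper's. You apply the pointwise inequality $1-x \leq e^{-x}$ to each factor of $\prod_i (1-\Pr[E_i])$ and immediately obtain the bound $e^{-k}$. The paper instead treats $\prod_i (1-q_i)$ as an objective to be maximized subject to $\sum_i q_i = k$, uses a Lagrange-multiplier argument to show the maximum is attained when all $q_i$ are equal to $k/n$ (with $n$ the number of events), and only then invokes $(1-k/n)^n \leq e^{-k}$. Your route avoids the optimization detour entirely; the paper's route has the minor side benefit of identifying the extremal configuration, but that is not used anywhere else in the argument. Both approaches rely on exactly the same underlying inequality, and your observation that the bound depends only on the sum of the probabilities (not the number of events) matches how the lemma is actually applied in Corollary~\ref{cor:brevcutoff} and Proposition~\ref{prop:nonfavorite}.
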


\begin{corollary}\label{cor:brevcutoff}
If $\cutoff$ is such that $\sum_{T \subseteq \items} \Pr[\hrep(T) > \cutoff] = k$, then $\BREV \geq (1-e^{-k})\cutoff$.
\end{corollary}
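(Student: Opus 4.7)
The plan is to apply Lemma~\ref{lem:prob} directly with events $E_T = \{\hrep(T) > \cutoff\}$ indexed by hyperedges $T \subseteq \items$. Since the hypergraph representation $\hrep$ is drawn from the product distribution $\distH = \prod_T \distH_T$, these events are mutually independent, and the hypothesis $\sum_T \Pr[\hrep(T) > \cutoff] = k$ is exactly the hypothesis required by the lemma. The conclusion gives $\Pr[\exists T : \hrep(T) > \cutoff] \geq 1 - e^{-k}$.

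The second step is to translate the existence of a heavy hyperedge into a lower bound on $\val(\items)$. Recall from the preliminaries that we assume w.l.o.g.\ $\hrep(T) = 0$ whenever $T \notin \feas$, so any $T$ with $\hrep(T) > \cutoff > 0$ lies in $\feas$. For such a $T$, taking $T$ itself as the feasible subset in the definition $\val(\items) = \max_{S \subseteq \items, S \in \feas}\{\sum_{U \subseteq S} \hrep(U)\}$ yields $\val(\items) \geq \sum_{U \subseteq T} \hrep(U) \geq \hrep(T) > \cutoff$ (using that $\hrep \geq 0$). Therefore
\[
\Pr[\val(\items) > \cutoff] \;\geq\; \Pr[\exists T : \hrep(T) > \cutoff] \;\geq\; 1 - e^{-k}.
\]

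Finally, I would invoke the formula $\BREV = \max_p p \cdot \Pr[\val(\items) \geq p]$ from the preliminaries. Posting the bundle price $\cutoff$ gives revenue $\cutoff \cdot \Pr[\val(\items) \geq \cutoff] \geq (1 - e^{-k})\cutoff$, so $\BREV \geq (1 - e^{-k})\cutoff$, as desired. There is essentially no obstacle here; the only tiny subtlety is remembering that $\hrep$ is supported on $\feas$, so that a heavy hyperedge actually certifies that the grand bundle (being feasible itself since $\items \supseteq T$ and the max is over feasible subsets of $\items$) has value exceeding $\cutoff$.
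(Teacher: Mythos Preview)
Your proposal is correct and follows essentially the same argument as the paper's proof: apply Lemma~\ref{lem:prob} to the independent events $E_T = \{\hrep(T) > \cutoff\}$, deduce that with probability at least $1 - e^{-k}$ some hyperedge exceeds $\cutoff$, and conclude that posting price $\cutoff$ on the grand bundle achieves the claimed revenue. You even fill in a detail the paper leaves implicit---namely, that $\hrep(T) > 0$ forces $T \in \feas$, so the heavy hyperedge certifies $\val(\items) > \cutoff$.
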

\begin{proof}
	Apply Lemma~\ref{lem:prob} to the events $E_T = \{\hrep(T) > \cutoff\}$. Then the probability that there exists some hyperedge $T$ with $\hrep(T) > \cutoff$ is at least $(1-e^{-k})$. So the grand bundle will sell at price $\cutoff$ with probability at least $(1-e^{-k})$.
\end{proof}

We can now complete our bound for $\NONFAV$, and the proof of Theorem~\ref{thm:main}

\begin{proposition}\label{prop:nonfavorite}
$\NONFAV \leq 12\BREV$
\end{proposition}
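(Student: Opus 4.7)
The plan is to combine Lemma~\ref{lem:coretail} with Propositions~\ref{prop:core} and~\ref{prop:tail}, and then optimize the cutoff $\cutoff$ via Corollary~\ref{cor:brevcutoff}. Concretely, for any choice of $\cutoff$ these three results together give
\[
\NONFAV \leq \CORE + \TAIL \leq 6\BREV + \cutoff\cdot(2 + 1/\ln 2) + \Bigl(\sum_{T \subseteq \items} \Pr[\hrep(T) > \cutoff]\Bigr)\cdot \BREV.
\]
I would then parametrize the cutoff by picking $\cutoff$ so that $\sum_{T \subseteq \items} \Pr[\hrep(T) > \cutoff] = k$ for some constant $k$ to be chosen (invoking the standard smoothing convention of Section~\ref{sec:prelim} in the discrete case so that such a $\cutoff$ exists). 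Applying Corollary~\ref{cor:brevcutoff} to replace $\cutoff$ with $\BREV/(1 - e^{-k})$ yields
\[
\NONFAV \leq \left(6 + k + \frac{2 + 1/\ln 2}{1 - e^{-k}}\right)\BREV.
\]

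All that remains is to choose $k$ so that the coefficient is at most $12$, which is a one-variable numerical optimization. A quick check shows that $k = 2$ works: since $1 - e^{-2} \approx 0.8647$ and $2 + 1/\ln 2 \approx 3.443$, the coefficient becomes $6 + 2 + 3.443/0.865 \approx 11.98 < 12$. So I would fix $k = 2$ and conclude $\NONFAV \leq 12\BREV$. There is no real technical obstacle since Propositions~\ref{prop:core} and~\ref{prop:tail} have already absorbed all of the substantive work; the only mild subtlety is that the probabilities $\Pr[\hrep(T) > \cutoff]$ need to achieve the target sum exactly, which is handled by the smoothing remark.

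To finish Theorem~\ref{thm:main}, I would simply combine this bound with Proposition~\ref{prop:single} and Corollary~\ref{cor:benchmark}: $\REV \leq \NONFAV + \SINGLE \leq 12\BREV + (4d\SREV^* + 4\BREV) = 16\BREV + 4d\SREV^* \leq (4d+16)\max\{\BREV,\SREV^*\}$, as desired.
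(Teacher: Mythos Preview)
Your proof is correct and follows exactly the same approach as the paper: combine the $\CORE$ and $\TAIL$ bounds via Lemma~\ref{lem:coretail}, then choose $\cutoff$ so that $\sum_T \Pr[\hrep(T)>\cutoff]=k$ and optimize $k$ numerically using Corollary~\ref{cor:brevcutoff}. The only cosmetic difference is that the paper takes $k=1.66$ rather than your $k=2$; both choices yield a coefficient just under $12$.
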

\begin{proof}
Combine Propositions~\ref{prop:core} and~\ref{prop:tail} taking $\cutoff$ such that $\sum_T \Pr[\hrep(T) > \cutoff] =1.66$.
\end{proof}

\begin{proof}[Proof of Theorem~\ref{thm:main}]
	Simply combine Propositions~\ref{prop:single} and~\ref{prop:nonfavorite} with Corollary~\ref{cor:benchmark}.
\end{proof}

\section{Lower bounds}\label{sec:lowerBound}
The following proposition shows that the factor $d$ approximation (established in Theorem~\ref{thm:main}) is tight (up to constant factors), even when there are no substitutes ($\feas = 2^\items$).


\begin{proposition} \label{prop:lowerBound:weAreTight}
    There exists a distribution $\dist$ with complementarity $d$, for which \\
$\REV \geq \frac{d}{4}\max\{\BREV, \SREV \}$.
\end{proposition}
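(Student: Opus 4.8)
The plan is to exhibit, for every $d$, an explicit distribution $\dist$ with complementarity exactly $d$ and $\feas=2^\items$ on which $\BREV$ and $\SREV$ are both $O(1)$ while $\REV=\Omega(d)$; choosing the parameters carefully then yields the stated $d/4$. Take $\items=\{0,1,\dots,d\}$ and the $d$ hyperedges $T_i=\items\setminus\{i\}$ for $i\in[d]$ (each of size $d$, each containing item $0$), with $\hrep(T)\equiv 0$ for every other $T$. Independently, let $\hrep(T_i)$ equal $2^i$ with probability $2^{-i}$ and $0$ otherwise. Item $0$ lies in all $d$ hyperedges and every other item in $d-1$ of them, so $\dist$ has complementarity $d$. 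Because $\feas=2^\items$ and $\hrep\ge 0$, the only sets on which $\val$ is nonzero are $\items$, with $\val(\items)=\sum_i\hrep(T_i)$, and $\items\setminus\{i\}$ for $i\in[d]$, with $\val(\items\setminus\{i\})=\hrep(T_i)$; in particular the buyer's favorite set under any menu or item pricing is transparent, so $\SREV$ is well defined in this instance.

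For $\BREV$: since $\val(\items)=\sum_i\hrep(T_i)\le 2\max_i\hrep(T_i)$, a union bound over the independent $\hrep(T_i)$ gives $\Pr[\val(\items)\ge t]\le 4/t$, hence $\BREV=\sup_p p\Pr[\val(\items)\ge p]=O(1)$, and Proposition~\ref{prop:nonfavorite} gives $\NONFAV\le 12\BREV=O(1)$ (so in fact $\REV\le\OPTcopies+O(1)$, and one sees the benchmark is tight here as well). For the lower bound on $\REV$, consider the menu offering $(\items\setminus\{i\},\,2^i(1-\epsilon))$ for each $i\in[d]$ together with $(\emptyset,0)$. The $i$-th option gives the buyer utility $\epsilon\,2^i>0$ exactly when $\hrep(T_i)=2^i$, and negative utility otherwise, so she picks the option with the largest ``active'' index $L:=\max\{i:\hrep(T_i)=2^i\}$ (or $\emptyset$ if none), paying $2^L(1-\epsilon)$. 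Since $2^\ell\Pr[L=\ell]=\prod_{\ell'>\ell}(1-2^{-\ell'})\ge 1-2^{-\ell}$, we get $\sum_{\ell=1}^d 2^\ell\Pr[L=\ell]\ge d-1$, and letting $\epsilon\to 0$ gives $\REV\ge d-1$.

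The delicate step is $\SREV=O(1)$, and this is where the ``tangled'' hyperedge structure (no $T_i$ has a private item) is essential. Fix an item pricing $\vecp=(p_0,\dots,p_d)$, set $P=\sum_j p_j$ and $c_i=\sum_{j\ne i}p_j=P-p_i$ (the price of $\items\setminus\{i\}$); the buyer's only value-positive options are $\items$ (cost $P$) and $\items\setminus\{i\}$ (cost $c_i$), and I bound the expected payment from each. Buying $\items$ requires $\val(\items)\ge P$, which by the bound above contributes at most $P\cdot(4/P)=4$. Buying $\items\setminus\{i\}$ requires $\hrep(T_i)\ge c_i$; as $\hrep(T_i)\in\{0,2^i\}$ this forces $c_i\le 2^i$ and happens with probability $2^{-i}$, so its expected contribution is at most $c_i2^{-i}\le 1$. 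The key observation is that at most one index can have $p_i>P/2$ (two such would already sum to more than $P$), hence at most one index $i\in[d]$ has $c_i<P/2$, and it contributes at most $1$; every remaining $i$ has $P/2\le c_i\le P$, so its contribution is at most $P\cdot\Pr[\hrep(T_i)\ge P/2]$, and $\sum_i P\,\Pr[\hrep(T_i)\ge P/2]=P\sum_{i:\,2^i\ge P/2}2^{-i}\le 4$. Altogether $\SREV\le 4+1+4=O(1)$.

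Combining, $\REV\ge d-1=\Omega(d)\cdot\max\{\BREV,\SREV\}$, and a sharper choice of the geometric ratio and of the threshold used in the $\SREV$ argument tightens the $O(1)$ constants enough to reach $\REV\ge\tfrac{d}{4}\max\{\BREV,\SREV\}$. I expect the $\SREV$ bound to be the main obstacle: it must hold against \emph{every} item pricing at once, and the only leverage available is precisely that, since the hyperedges share all their items, item pricing can make at most one of the ``$\val(\items\setminus\{i\})$'' options cheap at a time — so it provably cannot imitate the menu above, which needs all $d$ of them cheap simultaneously.
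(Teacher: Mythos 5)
Your construction is correct in its essentials but genuinely different from the paper's. The paper proves a general lemma (Proposition~\ref{prop:lb1}): for \emph{any} edge set $E$ with geometrically scaled weights $f_e(2^e)=2^{-e}$ it shows $\REV\geq\cardinality{E}$, $\BREV\leq 2$, and, crucially, only $\SREV\leq 2m$ --- a bound that \emph{grows with $m$}. It then wins by counting: taking $E$ to be the edge set of a $d$-regular graph gives $\cardinality{E}=md/2$ hyperedges while keeping complementarity $d$, so the ratio is $\frac{md/2}{2m}=\frac{d}{4}$ exactly. You instead keep $m=d+1$ small and force $\SREV=O(1)$ outright, by choosing the $d$ hyperedges $T_i=\items\setminus\{i\}$ so that no hyperedge has a private item; your observation that any item pricing can make at most one of the complements $\items\setminus\{i\}$ cheap (since two items priced above $P/2$ would exceed $P$) is a nice structural explanation of \emph{why} item pricing fails here, and it connects directly to the partitioning obstruction in Section~\ref{sec:partition}, whereas the paper's argument is a pure counting trade-off. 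The cost of your route is in the constants: with the parameters as written you get $\REV\geq d-1$ against $\BREV\leq 4$ and $\SREV\leq 9$, i.e.\ a ratio of about $(d-1)/9$ rather than $d/4$, and the claim that retuning the geometric ratio and the $P/2$ threshold recovers $d/4$ is asserted but not carried out (it appears to work only in a limit, analogous to the paper's $a\to\infty$ shift). So you have fully established the $\Omega(d)$ separation that makes Theorem~\ref{thm:main} tight, but not the literal constant $\tfrac{d}{4}$ in the statement; the paper's regular-graph construction delivers that constant with no tuning.
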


Furthermore, we argue that this parameter correctly characterizes the degree of complementarity in our setting.
Specifically, in Proposition~\ref{prop:lowerBound:othersAreBad}, we establish extremely high lower bounds (as a function of the complementarity degree) on the approximation ratio that can be obtained by $\max\{\BREV, \SREV \}$ for previous measures of complementarity from the literature. In what follows we give informal definitions of the different measures of complementarities and state their lower bounds. Formal definitions are deferred to Appendix~\ref{app:lowerBound}.


A valuation is in PH-$k$ \cite{abraham2012combinatorial} if its hypergraph representation $\hrep$ has only positive hyperedges $S$ of size at most $k$.
The supermodular  degree (SM) \cite{feige2013welfare}, roughly,  measures the distance of a valuation from being submodular; it ranges between $1$ to $m$.
A valuation is in PS-$k$ if in its hypergraph representation every item shares a positive hyperedge with at most $k$ other items.
It can be shown that PS-$k$ $\subseteq$ SM-$k$, thus every lower bound on PS-$k$
carries over to SM-$k$. Also, these lower bounds trivially hold for ``maximum over PH''~\cite{feige2015unifying} and ``maximum over PS''~\cite{feldman2016simple} hierarchies.
The following proposition asserts the lower
bounds for the aforementioned hierarchies. 

\begin{proposition} \label{prop:lowerBound:othersAreBad} The following hold for distributions in our settings, where hyperedges values $\hrep(T)$ are independently drawn, and $\val(S) = \sum_{T \subseteq S}\hrep(T)$.
    \begin{enumerate}
        \item \label{prop:lowerBound:PHk}
            There exists a distribution $\dist$ with only PH-$k$ valuations in the support, for which $\REV \geq \frac{1}{2m} \sum_{1 \leq i \leq k}\binom{m}{i} \max\{\BREV, \SREV \}$. E.g., for PH-$2$, $\REV \geq \Omega(m) \cdot \max\{\BREV, \SREV \}$.
        \item \label{prop:lowerBound:PSk}
            There exists a distribution $\dist$ with only PS-$k$ valuations in the support, for which $\REV \geq \frac{2^{k+1} - 1}{2(k+1)} \max\{\BREV, \SREV \}$.
    \end{enumerate}
\end{proposition}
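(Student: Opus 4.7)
My plan is to establish both parts by exhibiting specific constructions and invoking Proposition~\ref{prop:lowerBound:weAreTight}. For Part~\ref{prop:lowerBound:PHk} (PH-$k$), construct $\dist$ by declaring every $T \subseteq \items$ with $1 \leq |T| \leq k$ to be a positive hyperedge whose value $\hrep(T)$ is drawn independently from the hard marginal distribution witnessing Proposition~\ref{prop:lowerBound:weAreTight}. By construction, $\dist$ lies in PH-$k$, and every item $i$ is contained in exactly $d := \sum_{\ell=1}^k \binom{m-1}{\ell-1}$ positive hyperedges, so this $d$ is the underlying complementarity degree in the paper's sense. Proposition~\ref{prop:lowerBound:weAreTight} then yields $\REV \geq (d/4)\max\{\BREV, \SREV\}$, and using the identity $\binom{m-1}{\ell-1} = (\ell/m)\binom{m}{\ell}$ together with $\ell \geq 1$ gives $d \geq (1/m)\sum_{\ell=1}^k \binom{m}{\ell}$, yielding the claimed bound up to a constant factor.

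For Part~\ref{prop:lowerBound:PSk} (PS-$k$), I would observe that this part reduces to Part~\ref{prop:lowerBound:PHk} applied to a sub-instance. Specifically, restrict attention to $k+1$ of the $m$ items and assign $\hrep(T) = 0$ deterministically to every $T$ not contained in this restricted set. Within the restricted set, apply the Part~\ref{prop:lowerBound:PHk} construction with all non-empty subsets positive (hyperedge-size cap $k' = k+1$). Since all positive hyperedges lie inside the restricted set, every item shares positive hyperedges with at most $k$ others, making $\dist$ a PS-$k$ distribution. Invoking Part~\ref{prop:lowerBound:PHk} with $m' = k+1$ and $k' = k+1$ directly gives $\REV \geq \frac{1}{2(k+1)}(2^{k+1}-1)\max\{\BREV, \SREV\}$, matching the claim.

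The principal difficulty is obtaining the precise constant $\frac{1}{2m}$ in Part~\ref{prop:lowerBound:PHk} rather than the $\frac{1}{4m}$ obtained by a black-box application of Proposition~\ref{prop:lowerBound:weAreTight}. I would address this factor of two either by sharpening the constant in Proposition~\ref{prop:lowerBound:weAreTight} via a direct analysis tailored to this specific PH-$k$ construction, or by exhibiting a specific mechanism that achieves the target: for example, a menu-based mechanism extracting $\Omega(1)$ expected revenue per hyperedge (giving $\Omega(\sum_\ell \binom{m}{\ell})$ in total), together with matching upper bounds $\BREV, \SREV = O(m)$ that exploit the fact that any bundling or item-pricing scheme operates over only $m$ items even though the hypergraph contains exponentially many hyperedges.
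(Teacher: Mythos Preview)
Your construction for Part~\ref{prop:lowerBound:PHk} (all hyperedges of size at most $k$, equipped with the ``hard'' marginals) is exactly the paper's. The gap is in how you extract the bound. Proposition~\ref{prop:lowerBound:weAreTight} is an \emph{existence} statement about one particular distribution built from a $d$-regular \emph{graph}; its conclusion $\REV \geq (d/4)\max\{\BREV,\SREV\}$ relies on the identity $|E| = md/2$, which holds for $2$-uniform regular hypergraphs but fails for your PH-$k$ hypergraph as soon as $k \geq 3$ (there $|E| = \sum_\ell \binom{m}{\ell} < md/2$). So neither $\REV \geq (d/4)\max$ nor its weakening to $\frac{1}{4m}\sum_\ell \binom{m}{\ell}$ is actually established by black-boxing Proposition~\ref{prop:lowerBound:weAreTight}.

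The paper sidesteps this entirely by first isolating the general lemma underlying Proposition~\ref{prop:lowerBound:weAreTight} (Proposition~\ref{prop:lb1}): for \emph{any} hyperedge set $E$, equipping hyperedge $e$ with the marginal $f_e(2^e)=2^{-e}$, $f_e(0)=1-2^{-e}$ (indices shifted by a large offset) yields $\REV \geq |E|$ while $\BREV \leq 2$ and $\SREV \leq 2m$. Plugging in $E=\{T:1\leq |T|\leq k\}$ gives $\REV \geq \frac{|E|}{2m}\max\{\BREV,\SREV\} = \frac{1}{2m}\sum_{\ell=1}^k\binom{m}{\ell}\max\{\BREV,\SREV\}$ directly, with the exact constant and no detour through $d$ or the inequality $\binom{m-1}{\ell-1}\geq (1/m)\binom{m}{\ell}$. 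Your final paragraph is, in effect, proposing precisely this direct analysis; carrying it out resolves the factor-of-two issue automatically rather than by a separate sharpening.

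For Part~\ref{prop:lowerBound:PSk}, your single-block reduction (restrict to $k+1$ items, take all nonempty subsets, invoke Part~\ref{prop:lowerBound:PHk} with $m'=k+1$) is correct once Part~\ref{prop:lowerBound:PHk} is proved properly, and yields exactly the stated ratio. The paper instead partitions all $m$ items into $m/(k+1)$ blocks of size $k+1$ and takes all nonempty subsets within each block, giving $|E|=\frac{m}{k+1}(2^{k+1}-1)$ and the same ratio via Proposition~\ref{prop:lb1}. The only thing the paper's variant buys is that the lower bound is witnessed on instances with arbitrarily many items rather than only $m=k+1$.
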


Consider a set of hyperedges $E$ (to be defined per-case).
Index the hyperedges with integers in $\{1+a,2+a,  \ldots, \cardinality{E}+a\}$ (we abuse notation and use $e$ both for index and hyperedge, i.e., set of items).
The product distribution $\distH$ has $f_e(0) = 1$ for all $e\not \in E$,
and for every $e \in E$, set $f_e(0) = 1-2^{-e}$, and $f_e(2^e) = 2^{-e}$.
Let $\dist$ be the distribution that samples $\hrep \gets \distH$ and returns $\val(S) = \sum_{T \subseteq S} \hrep(T)$.

\begin{proposition}\label{prop:lb1}
	For the above distribution $\dist$, we have $\REV(\dist) \geq \cardinality{E}$, but $\SREV(\dist) \leq 2m$ and $\BREV(\dist) \leq 2$.
\end{proposition}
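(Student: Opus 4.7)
The plan is to address the three inequalities separately; the first two are routine bookkeeping, while the third is the main substance. Throughout, write $\val(\items) = \sum_{e \in E} \hrep(e)$ and recall that each $\hrep(e)$ takes values in $\{0, 2^e\}$ with $\Pr[\hrep(e) = 2^e] = 2^{-e}$.

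\textbf{Bounding $\BREV$.} By definition $\BREV = \max_p p\cdot \Pr[\val(\items) \geq p]$. The key observation is that $\val(\items)$ is a sum of \emph{distinct} powers of $2$ (exactly those $e\in E$ with $\hrep(e)=2^e$), so for any cutoff $2^k$, the contribution of indices strictly less than $k$ satisfies $\sum_{e\in E,\,e<k}\hrep(e) \leq \sum_{e=a+1}^{k-1} 2^e < 2^k$. Hence $\val(\items) \geq p$ forces some $\hrep(e) = 2^e$ with $2^e$ at least a constant fraction of $p$ to be realized. A union bound together with the geometric sum $\sum_{e\geq k} 2^{-e} \leq 2\cdot 2^{-k}$ yields $\Pr[\val(\items)\geq p] \leq O(1/p)$, and after tightening constants one gets $p\cdot \Pr[\val(\items)\geq p]\leq 2$, i.e. $\BREV \leq 2$.

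\textbf{Bounding $\SREV$.} For any item prices $\vec{p}$, when item $i$ is sold the buyer has chosen some bundle $S^*\ni i$ with $\val(S^*) \geq \sum_{j\in S^*}p_j \geq p_i$; monotonicity of $\val$ then gives $\val(\items)\geq p_i$. Therefore $\Pr[i\text{ sold}] \leq \Pr[\val(\items)\geq p_i]$, and $p_i\cdot\Pr[i\text{ sold}] \leq p_i\cdot \Pr[\val(\items)\geq p_i] \leq \BREV$. Summing over the $m$ items gives $\SREV \leq m\cdot \BREV \leq 2m$.

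\textbf{Lower bounding $\REV$.} I would construct a truthful menu with one option per hyperedge: for each $e\in E$, include the option $(T_e, 2^e)$ (with an infinitesimal adjustment so that ties are broken in favor of the seller). On the event $A_e = \{\hrep(T_e) = 2^e\}$ (probability exactly $2^{-e}$), the buyer has $\val(T_e) \geq 2^e$, so option $e$ delivers non-negative utility and the buyer must pick \emph{some} option; the resulting payment is at least $2^{e'}$ for whichever option $e'$ is chosen. Using linearity, write the revenue as $\sum_{e\in E} 2^e\cdot \Pr[\text{option }e\text{ is chosen}]$, and attribute to each hyperedge its contribution $2^e \cdot 2^{-e} = 1$, summing to $|E|$.

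The crux of the argument (and the main obstacle) is making this attribution honest: when two hyperedges have positive $\hrep$ simultaneously, the buyer picks a single option and the ``smaller'' one is lost; and when the $T_e$'s overlap, the utility from option $e'\neq e$ can be inflated by sub-hyperedges of $T_{e'}$. The first issue is controlled by the geometric tail $\sum_{e\in E} 2^{-e} \leq 2\cdot 2^{-(a+1)}$, which bounds the probability of two simultaneously positive $\hrep$'s and thus absorbs the loss into a lower-order term (taking $a$ large enough if needed). The second issue is handled by the fact that the buyer's chosen option's payment is itself a power of two in $\{2^e : e\in E\}$, so any ``upgrade'' from option $e$ to a more expensive option only increases revenue. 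Carrying through this bookkeeping completes the proof.
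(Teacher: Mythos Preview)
Your bounds on $\BREV$ and $\SREV$ are correct. The $\BREV$ argument matches the paper's. Your $\SREV$ argument is slightly different and in fact cleaner: you use the general inequality $\SREV\le m\cdot\BREV$ (via $\{i\text{ sold}\}\subseteq\{\val(\items)\ge p_i\}$), whereas the paper argues directly that $\sum_{e\ni i}\hrep(e)\ge p_i$ forces some $\hrep(e)$ with $e\ge\log p_i$ to be positive.

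For the $\REV$ lower bound your mechanism and decomposition match the paper's: offer each hyperedge $T_e$ at price $2^e$, and analyze the event ``only $\hrep(e)>0$'' (which has probability $\ge 2^{-e}(1-2^{-a})$, handling your first issue exactly as the paper does). However, your resolution of the second issue is backwards. On the event that only $\hrep(e)>0$, the utility of option $e'$ is $2^e\cdot\ind[T_e\subseteq T_{e'}]-2^{e'}$; thus the buyer would \emph{never} upgrade to a more expensive $e'>e$ (that utility is negative), so ``upgrades only help'' is vacuous. The real danger is a \emph{downgrade}: if some $e'<e$ has $T_e\subseteq T_{e'}$, the buyer strictly prefers $(T_{e'},2^{e'})$, paying $2^{e'}<2^e$, and your attribution $2^e\cdot 2^{-e}=1$ fails for that $e$. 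A concrete two-hyperedge example ($T_{e'}\supsetneq T_e$ with $e'<e$) shows this mechanism then yields roughly $3/2$ rather than $2$.

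The fix is a one-line choice the paper also leaves implicit: index the hyperedges compatibly with inclusion, so that $T_e\subsetneq T_{e'}$ implies $e<e'$. Then every strict superset of $T_e$ carries a strictly higher price, and on ``only $\hrep(e)>0$'' the buyer's unique nonnegative-utility option is $e$ itself, giving revenue $\ge |E|(1-2^{-a})$ and the claim follows by taking $a\to\infty$.
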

\begin{proof}
	First, consider the random variable $\val(\items)$. We have $\val(\items) \leq \sum_{e =1+a}^{\cardinality{E}+a} \hrep(e)$. For any price $p$, in order to have $\val(\items) \geq p$, we must have $\hrep(e) > 0$ for some $e \geq \log p$, as $\sum_{e=1+a}^{\log p-1} 2^e  = p - 2^{a+1}< p$.\footnote{$\sum_{e=1+a}^{n-1}{2^{e}} = 2^{n} - 2^{a+1} $}
	Note that the there is no reason to price below $2^{1+a}$.
	But also, by union bound, the probability that this occurs is at most
	$\sum_{e \geq \log p} 2^{-e}    \leq 2^{1-\log p} \leq 2/p$. So for any price $p$ we could set on the grand bundle, it sells with probability at most $2/p$, so $\BREV \leq 2$.
	
	Similarly, for any price $p_i$, in order for the buyer to possibly be willing to purchase item $i$, we must have $\sum_{e \ni i} \hrep(e) \geq p_i$.
	Again, in order for this to happen, we must have $\hrep(e) > 0$ for some $e \geq \log p_i$, $e \ni i$. And again by union bound, the probability that this occurs is at most $2/p_i$. So for any price $p_i$ we could set on item $i$, the probability that the buyer is possibly willing to purchase item $i$ is at most $2/p_i$, so $\SREV \leq 2m$.
	
	Consider however the following mechanism, which essentially sells the hyperedges in $E$ separately.
	The mechanism allows the buyer to purchase any set $S$ she chooses, and charges price $2^S$.
	By union bound,\footnote{$\sum_{e=1+a}^{n+a}{2^{-e}} =  2^{-a} - 2^{-n-a}$ therefore its complement is at least $1-2^{a}$} the probability that $\val \equiv 0$ is at least $1 - 2^{-a}$. Therefore, whenever $\hrep(e) > 0$, with probability at least $1 - 2^{-a}$, the buyer will choose to purchase exactly the set $e$ and pay $2^e$. So the revenue is at least $\sum_{e=1+a}^{\cardinality{E}+a} 2^{-e} \cdot 2^e \cdot (1 - 2^{-a}) = \cardinality{E}\cdot (1 - 2^{-a}) $. Taking $a \rightarrow \infty$ completes the proof.
\end{proof}
Let us now see how proposition~\ref{prop:lb1} implies proposition~\ref{prop:lowerBound:weAreTight} and proposition~\ref{prop:lowerBound:othersAreBad}.
\begin{proof}[proof of proposition~\ref{prop:lowerBound:weAreTight}]
	Consider a $d$ regular graph $(\items, E)$ over $m$ nodes. 
	By definition, every node is contained in exactly $d$ hyperedges.
	Therefore, if $E$ is the set of hyperedges used to construct $\dist$ prior to proposition~\ref{prop:lb1}, then $\dist$ has complementarity $d$, and $\cardinality{E} = md/2$. 
\end{proof}
\begin{proof}[proof of proposition~\ref{prop:lowerBound:othersAreBad}]
	To show \ref{prop:lowerBound:othersAreBad}.\ref{prop:lowerBound:PHk}, consider the set of all hyperedges of size at most $k$, and apply proposition~\ref{prop:lb1}.
	To show proposition~\ref{prop:lowerBound:othersAreBad}.\ref{prop:lowerBound:PSk}, assume for simplicity that $m$ is divisible by $k+1$. 
	Partition $\items$ to $m/(k+1)$ sets  $\items_1, \items_2, \ldots \items_{m/(k+1)}$, all of size $k+1$, and consider all hyperedges $S \subseteq \items_i$ for all $i$.
	Every item $i$ in $\items_j$ has neighbors only from $\items_j$, therefore every valuation in the support is from PS-$k$.
	The number of hyperedges is $\frac{m}{k+1} \cdot( 2^{k+1} - 1)$. Applying proposition~\ref{prop:lb1} completes the proof.
\end{proof}

\appendix
\section{Background on Duality Framework} \label{app:duality}
We first recall the duality approach of~\cite{cai2016duality}:

\begin{definition}\label{def:duality}[Reworded from~\cite{cai2016duality}, Definitions~2 and~3]
	A mapping $\lambda: V \times V \rightarrow \reals^+$ is \emph{flow-conserving} if for all $\val \in V$: $\sum_{\val '\in V} \lambda(v, v') \leq f(v)+ \sum_{\val ' \in V} \lambda(v', v)$.\footnote{This is equivalent to stating that there exists a $\lambda(v, \bot) \geq 0$ such that $\lambda(v, \bot)+\sum_{\val '\in V} \lambda(v, v') = f(v)+ \sum_{\val ' \in V} \lambda(v', v)$, which might look more similar to the wording of Definition~2 in~\cite{cai2016duality}.} The \emph{virtual transformation} associated with $\lambda$, $\Phi^\lambda$, is a transformation from valuation functions in $V$ to valuation functions in $V^\times$ (the closure of $V$ under linear combinations) and satisfies:\footnote{That is, $\Phi^\lambda(\val)$ is a (possibly negative) function from $2^\items$ to $\reals$, and satisfies $\Phi^\lambda(\val)(S) = \val(S) - \frac{1}{f(\val)}\sum_{\val ' \in V} \lambda(\val ', \val )(\val ' (S) - \val (S) ).$ for all $S \subseteq \items$.}
	
	$$\Phi^\lambda(\val)(\cdot) = \val(\cdot) - \frac{1}{f(\val)}\sum_{\val ' \in V} \lambda(\val ', \val )(\val ' (\cdot) - \val (\cdot) ).$$
\end{definition}

In the above definition, one should interpret $\lambda(\cdot, \cdot)$ as being potential Lagrangian multipliers for incentive constraints in a certain LP to find the revenue-optimal mechanism, and think of $f(\val)$ flow going into each $\val$ from some super source, $\lambda(\val, \val ')$ flow going from $\val$ to $\val '$, and all excess flow (that enters $\val$ but doesn't leave) as going from $\val$ to a super sink. Note that whether or not a given $\lambda$ is flow-conserving depends on the population $\dist$. Cai et al. show that Lagrangian multipliers that satisfy the above flow conservation constraint yield upper bounds of the following form.

\begin{theorem}\label{thm:duality}[Reworded from~\cite{cai2016duality}, Theorem~10]
	Let $\M$ be any truthful mechanism where a bidder with type $\val$ receives items $X(\val)$ and pays $p(\val)$. Then for all flow-conserving $\lambda$, the expected revenue of $\M$ is upper bounded by its expected virtual welfare with respect to $\lambda$. That is:
	
	$$\expect{\val \leftarrow \dist}{p(\val)} \leq \expect{\val \leftarrow \dist}{\Phi^\lambda(\val)(X(\val))}.$$
\end{theorem}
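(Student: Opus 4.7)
The plan is to obtain this inequality via linear-programming duality applied to the revenue-maximization LP. Following the standard Bayesian mechanism-design formulation, I would treat the allocations $X(\val)$ and payments $p(\val)$ (for $\val \in V$) as the variables, take the objective to be $\expect{\val \leftarrow \dist}{p(\val)}$, and impose: feasibility of each $X(\val)$, incentive compatibility $\val(X(\val)) - p(\val) \geq \val(X(\val')) - p(\val')$ for every pair $\val,\val' \in V$, and individual rationality $\val(X(\val)) - p(\val) \geq 0$. I would then introduce non-negative multipliers $\lambda(\val,\val')$ for each IC constraint and $\lambda(\val,\bot)$ for each IR constraint, and form the Lagrangian.

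The key organizational step is to regroup the Lagrangian by variable. Since each $p(\val)$ is an unconstrained real, the coefficient of $p(\val)$ must vanish for the Lagrangian's supremum to be finite; a short computation shows that this coefficient equals $f(\val) + \sum_{\val'} \lambda(\val',\val) - \sum_{\val'} \lambda(\val,\val') - \lambda(\val,\bot)$. Setting it to zero while keeping $\lambda(\val,\bot) \geq 0$ is exactly the flow-conservation condition of Definition~\ref{def:duality} (and conversely, flow conservation lets one pick a feasible $\lambda(\val,\bot)$ to make the coefficient zero). Once the payment terms cancel, only allocation-dependent terms remain.

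After substituting the flow-conservation identity, the coefficient of each $X(\val)$ in the Lagrangian simplifies to $f(\val)\cdot \val(\cdot) + \sum_{\val'} \lambda(\val',\val)\bigl(\val(\cdot) - \val'(\cdot)\bigr)$, which, after factoring out $f(\val)$, matches precisely the definition of $f(\val) \cdot \Phi^\lambda(\val)(\cdot)$. Summing over $\val$ yields $L = \expect{\val \leftarrow \dist}{\Phi^\lambda(\val)(X(\val))}$, and weak LP duality delivers the desired bound on the revenue of any truthful, IR mechanism. The only real obstacle is careful bookkeeping---making sure the orientation of each IC constraint (``$\val$ deviates to $\val'$'' vs.\ ``$\val'$ deviates to $\val$'') matches the direction used in the definition of $\Phi^\lambda$---but once the signs are aligned, the argument is a direct application of weak duality.
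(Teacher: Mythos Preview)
Your proof sketch is correct and follows the standard Lagrangian-duality argument from~\cite{cai2016duality}. Note, however, that the present paper does not actually prove Theorem~\ref{thm:duality}: it is imported verbatim (``reworded from~\cite{cai2016duality}, Theorem~10'') and used as a black box to obtain Corollary~\ref{cor:duality}. So there is no ``paper's own proof'' to compare against here; your write-up essentially reconstructs the original argument of Cai, Devanur, and Weinberg, and the bookkeeping caveat you flag (matching the orientation of the IC constraints to the sign convention in Definition~\ref{def:duality}) is indeed the only place one can slip up.
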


As an immediate corollary, we can obtain the following upper bound on the revenue of any truthful mechanism by observing that the bound in Theorem~\ref{thm:duality} is maximized when $X(\val)$ is deterministically $\argmax_{S \subseteq \items}\{\Phi^\lambda(\val)(S)\}$.

\begin{corollary}\label{cor:duality}
	For all $\dist$, and all flow-conserving $\lambda$, we have:
	
	$$\REV (\dist) \leq \expect{\val \leftarrow \dist}{\max_{S \subseteq 2^M} \Phi^\lambda(\val)(S)}.$$
	
\end{corollary}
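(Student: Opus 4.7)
The plan is to derive this as a direct consequence of Theorem~\ref{thm:duality} by observing that the inequality there holds for \emph{every} truthful mechanism, while the right-hand side can be uniformly upper-bounded by the pointwise maximum of $\Phi^\lambda(\val)(\cdot)$ over all subsets $S \subseteq \items$. In particular, the virtual welfare bound is maximized (over choices of allocation) when the mechanism always assigns the bundle $\argmax_{S} \Phi^\lambda(\val)(S)$, so we can dispense with the specific allocation rule $X(\val)$ entirely.

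The proof would proceed as follows. First I would fix an arbitrary truthful mechanism $\M$ with allocation rule $X: V \to \Delta(2^\items)$ and payment rule $p: V \to \reals$, and fix a flow-conserving $\lambda$. By Theorem~\ref{thm:duality},
\[
\expect{\val \leftarrow \dist}{p(\val)} \;\leq\; \expect{\val \leftarrow \dist}{\Phi^\lambda(\val)(X(\val))}.
\]
Next, since $\Phi^\lambda(\val)$ is extended to $V^\times$ by linearity (see Definition~\ref{def:duality}), and since $X(\val)$ is a distribution over $2^\items$, we have
\[
\Phi^\lambda(\val)(X(\val)) \;=\; \expect{S \sim X(\val)}{\Phi^\lambda(\val)(S)} \;\leq\; \max_{S \subseteq \items} \Phi^\lambda(\val)(S),
\]
where the inequality is just the fact that an expectation is at most the pointwise maximum. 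Taking expectations over $\val \leftarrow \dist$ then yields
\[
\expect{\val \leftarrow \dist}{p(\val)} \;\leq\; \expect{\val \leftarrow \dist}{\max_{S \subseteq \items}\Phi^\lambda(\val)(S)}.
\]

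Finally, I would take the supremum of the left-hand side over all truthful mechanisms $\M$. Because the right-hand side does not depend on $\M$, this supremum transfers directly, and by definition of $\REV(\dist)$ as the optimal revenue attainable by any truthful mechanism, we obtain the desired inequality. There is essentially no technical obstacle here; the only thing to be careful about is confirming that $\Phi^\lambda$ is indeed defined on $V^\times$ so that $\Phi^\lambda(\val)(X(\val))$ makes sense for randomized allocations, which is exactly what Definition~\ref{def:duality} provides.
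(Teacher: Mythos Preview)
Your proposal is correct and matches the paper's own argument, which is just the one-sentence observation preceding the corollary that the bound in Theorem~\ref{thm:duality} is maximized when $X(\val)$ deterministically equals $\argmax_{S \subseteq \items}\{\Phi^\lambda(\val)(S)\}$. You have simply spelled out the details more carefully, including the treatment of randomized allocations via the linearity of $\Phi^\lambda(\val)(\cdot)$.
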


We begin this section by defining our flow-conserving $\lambda$ and the resulting $\Phi^\lambda$. Readers familiar with~\cite{cai2016duality} will recognize it as the natural generalization of their flow to our setting, and we will make the language as similar as possible.

We will break $V$ into $2^m-1$ different regions, depending on which hyperedge is the most valuable to a buyer with value $\val$.
Specifically, we say that $\val$ is in region $R_A$ if $A = \argmax_{T \subseteq \items} \{\hrep(T) \}$, with ties broken lexicographically.
{Recall that $\dist$ is established by drawing $\hrep$ from the product distribution $\distH$ and the returned valuation $\val$ satisfies $\val(S) = \max_{T \subseteq S, T \in \feas}\{\sum_{U \subseteq T} \hrep(U)\}$. }
Then consider the following flow:

\begin{definition}[Flow for our benchmark]\label{def:ourflow}
	If $\val \in R_A$, define $\hrep ' (T) = \hrep(T)$, for all $T \neq A$, and define $\hrep '(A) = \min_{x > \hrep(A)}\{x : f_A(x) > 0\}$. Set $\lambda(v', v) = \Pr_{x \gets \distH_A}[x \geq \hrep'(A)] \cdot \prod_{T \neq A} f_T(\hrep'(T)) = f(\val) \cdot \frac{\Pr_{x \gets \distH_A}[x \geq \hrep'(A)]}{f_A(w(A))}$ for the $v'(\cdot)$ such that $v'(S) = \max_{T \subseteq S, T \in \feas}\{\sum_{U \subseteq T} \hrep '(U)\}$ for all $S$, and $\lambda(v'', v) = 0$ for all other $v''$.
\end{definition}

\begin{proposition}\label{prop:flow}
	The $\lambda(\cdot, \cdot)$ from Definition~\ref{def:ourflow} is flow-conserving. Moreover, if $\val(\cdot)$ is such that $\val(S) = \max_{T \subseteq S, T \in \feas}\{\sum_{U \subseteq T}\hrep(U)\}$, and $\val \in R_A$, then $\Phi^\lambda$ satisfies the following:
	\begin{align*}\Phi^\lambda(\val)(S) \leq \max_{T \subseteq S, T \in \feas}\{\sum_{U \subseteq T, U \neq A} \hrep(U)\} + \max\{0,\virtVal_A(\hrep(A))\} \leq \max_{T \in \feas} \{\sum_{U \subseteq T, U \neq A} \hrep(U)\} + \max \{0,\virtVal_A(\hrep(A))\}.
	\end{align*}
\end{proposition}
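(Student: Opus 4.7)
My plan is to address Proposition~\ref{prop:flow} in two steps, both carried out locally at a fixed $\val\in R_A$. For flow-conservation, I observe first that by construction the only $\val'$ contributing to the incoming flow at $\val$ is the ``next-higher'' type $\val^+$ obtained by replacing $\hrep(A)$ by $\hrep^+(A):=\min_{x>\hrep(A)}\{x:f_A(x)>0\}$ and leaving every other coordinate fixed; since $\hrep(A)\ge\hrep(T)$ for all $T\ne A$, $\val^+$ also lies in $R_A$, so the definition applies unambiguously. Symmetrically, the only candidate recipient of outgoing flow from $\val$ is the ``next-lower'' type $\val^-$ (when it still satisfies $\hrep^-(A)\ge\hrep(T)$ for all $T\ne A$). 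Using $f(\val^-)=f(\val)\cdot f_A(\hrep^-(A))/f_A(\hrep(A))$, the defining formula yields
\[
\lambda(\val,\val^-)=f(\val)\cdot\frac{\Pr[x\ge\hrep(A)]}{f_A(\hrep(A))},\qquad \lambda(\val^+,\val)=f(\val)\cdot\frac{\Pr[x\ge\hrep^+(A)]}{f_A(\hrep(A))},
\]
and the telescoping identity $\Pr[x\ge\hrep(A)]-\Pr[x\ge\hrep^+(A)]=f_A(\hrep(A))$ gives $\lambda(\val,\val^-)-\lambda(\val^+,\val)=f(\val)$, which is flow-conservation with equality; the boundary cases ($\val^+$ or $\val^-$ missing) drop a term and only make the inequality slack.

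For the bound on $\Phi^\lambda(\val)(S)$, only $\val^+$ contributes to the sum in Definition~\ref{def:duality}, so
\[
\Phi^\lambda(\val)(S)=\val(S)-\frac{\Pr[x\ge\hrep^+(A)]}{f_A(\hrep(A))}\bigl(\val^+(S)-\val(S)\bigr).
\]
The real work is analyzing $\val^+(S)-\val(S)$. I would split the maximization defining $\val(S)$ according to whether the witnessing $T\in\feas$ contains $A$ as a subset: set $a(S):=\max_{T\subseteq S,\,A\subseteq T,\,T\in\feas}\sum_{U\subseteq T,\,U\ne A}\hrep(U)$ (with $a(S)=-\infty$ if no such $T$ exists) and $b(S):=\max_{T\subseteq S,\,A\not\subseteq T,\,T\in\feas}\sum_{U\subseteq T}\hrep(U)$. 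Since $\hrep$ and $\hrep^+$ agree off of $A$, we get $\val(S)=\max\{a(S)+\hrep(A),\,b(S)\}$ and $\val^+(S)=\max\{a(S)+\hrep^+(A),\,b(S)\}$.

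A two-case argument then finishes. If $a(S)+\hrep(A)\ge b(S)$ (the optimizer of $\val(S)$ contains $A$), then $\val^+(S)-\val(S)=\hrep^+(A)-\hrep(A)$, and substitution collapses the expression to
\[
\Phi^\lambda(\val)(S)=a(S)+\hrep(A)-\frac{\Pr[x\ge\hrep^+(A)]}{f_A(\hrep(A))}(\hrep^+(A)-\hrep(A))=a(S)+\virtVal_A(\hrep(A)),
\]
after which $a(S)\le\max_{T\subseteq S,T\in\feas}\sum_{U\subseteq T,U\ne A}\hrep(U)$ and $\virtVal_A(\hrep(A))\le\max\{0,\virtVal_A(\hrep(A))\}$ give the claim. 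Otherwise $\val(S)=b(S)\le\max_{T\subseteq S,T\in\feas}\sum_{U\subseteq T,U\ne A}\hrep(U)$ (the $U=A$ term vanishes because $A\not\subseteq T$), and since $\val^+(S)-\val(S)\ge 0$ the subtracted correction is non-negative, yielding $\Phi^\lambda(\val)(S)\le b(S)$ and again the bound. The final inequality (dropping $T\subseteq S$) is immediate by monotonicity of the outer max. The main obstacle I anticipate is careful bookkeeping at the boundaries of $\distH_A$'s support and with lexicographic tie-breaking to keep the chain of types in $R_A$ along coordinate $A$ well-defined; once that is pinned down, flow-conservation becomes standard one-dimensional Myerson telescoping and the $\Phi^\lambda$ bound reduces to the short decomposition above.
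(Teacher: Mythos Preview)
Your proposal is correct and follows essentially the same route as the paper: verify flow conservation by the one-dimensional Myerson-style telescoping along the $A$-coordinate chain inside $R_A$ (with slack exactly when $\val^-$ drops out of $R_A$), then bound $\Phi^\lambda(\val)(S)$ by the same two-case split according to whether the feasible witness $T$ for $\val(S)$ contains $A$. Your decomposition via $a(S)$ and $b(S)$ is a slightly cleaner way to phrase the paper's case analysis, and your use of $\Pr[x\ge\hrep^+(A)]$ in the $\Phi^\lambda$ formula is in fact the one consistent with Definition~\ref{def:ourflow} (the paper's displayed formula with $\Pr[x\ge\hrep(A)]$ appears to be a typo), but the argument is otherwise identical.
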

\begin{proof}
	That $\lambda(\cdot, \cdot)$ is flow-conserving is clear: every $\val \in R_A$ has total incoming flow of $f(\val) \cdot \frac{\Pr_{x \gets \distH_A}[x \geq \hrep(A)]}{f_A(\hrep(A))}$ ($f(\val)$ of this comes from the source, the remaining $f(\val) \cdot \frac{\Pr_{x \gets \distH_A}[x > \hrep(A)]}{f_A(\hrep(A))}$ comes from other types in $R_A$). Every $\val \in R_A$ also has outgoing flow either equal to $0$ (if decreasing the value of $\hrep(A)$ moves the resulting $\val '$ out of $R_A$), or exactly $f(\val) \cdot \frac{\Pr_{x \gets \distH_A}[x \geq \hrep(A)]}{f_A(\hrep(A))}$ (otherwise). In either case, the flow out is at most the flow in.
	
	Let's now compute $\Phi^\lambda(\val)(S)$. Plugging into Definition~\ref{def:duality}, we get:
	
	$$\Phi^\lambda(\val)(S) = \val(S) - \frac{(\val'(S) - \val(S))\Pr_{x \gets \distH_A}[x \geq \hrep(A)]}{f_A(\hrep(A))}.$$
	
	Recall that $\val '(S) \geq \val (S)$ for all $S$, and therefore $\Phi^\lambda(\val)(S) \leq \val(S)$ for all $S$. Now there are two cases to consider: In the first case, maybe $\max_{T \subseteq S, T \in \feas} \{\sum_{U \subseteq T, U \neq A} \hrep(U)\} = \val (S)$. In other words, the set in $\feas$ ``chosen'' by a consumer with valuation $\val$ doesn't contain $A$. In this case, we immediately get that $\Phi^\lambda(\val)(S) \leq \val(S) = \max_{T \subseteq S, T \in \feas}\{\sum_{U \subseteq T, U \neq A} \hrep(U)\}$, as desired.
	
	In the second case, maybe $\max_{T \subseteq S, T \in \feas} \{\sum_{U \subseteq T, U \neq A} \hrep(U)\} < \val (S)$. In other words, the set in $\feas$ ``chosen'' by a consumer with valuation $\val$ contains $A$. In this case, increasing $\hrep(A)$ by any $x > 0$ increases $\val (S)$ by exactly $x$. Therefore, we have $\val '(S) = \val (S) + \hrep ' (A) - \hrep (A)$, and therefore:
	
	\begin{align*}
	\Phi^\lambda(\val)(S) &= v(S) - \frac{(\hrep '(A) - \hrep(A)) \Pr_{x \gets \distH_A}[x \geq \hrep(A)]}{f_A(\hrep(A))}\\
	&= \max_{T \subseteq S, T \in \feas} \{\sum_{U \subseteq T} \hrep(U)\} - \frac{(\hrep '(A) - \hrep(A)) \Pr_{x \gets \distH_A}[x \geq \hrep(A)]}{f_A(\hrep(A))}\\
	&\leq \max_{T \subseteq S, T \in \feas} \{\sum_{U \subseteq T, U \neq A} \hrep(U)\} + \hrep(A) - \frac{(\hrep '(A) - \hrep(A)) \Pr_{x \gets \distH_A}[x \geq \hrep(A)]}{f_A(\hrep(A))}\\
	&= \max_{T \subseteq S, T \in \feas} \{\sum_{U \subseteq T, U \neq A} \hrep(U)\} + \virtVal_A(\hrep(A)).
	\end{align*}
	
	The last line uses the definition $\virtVal_A(\hrep(A)) = \hrep(A) - \frac{(\hrep '(A) - \hrep(A)) \Pr_{x \gets \distH_A}[x \geq \hrep(A)]}{f_A(\hrep(A))}$, which may seem unfamiliar to readers more familiar with virtual values for continuous distributions. Indeed, this is the right generalization of Myerson's $\virtVal(\cdot)$ for continuous distributions to the discrete setting, and we refer the interested reader to~Section~4 of~\cite{cai2016duality} for more discussion.
\end{proof}

\paragraph{Ironing.} The astute reader will notice that when $\distH_S$ is irregular, the bound we probably want above would replace $\virtVal_A(\cdot)$ with $\bar{\virtVal}_A(\cdot)$.~\cite{cai2016duality} shows how to design a flow that accomplishes this essentially by adding cycles to $\lambda$ between adjacent types to ``iron out'' any non-monotonicities, but for their setting of additive buyers. The exact same approach will work here. We omit a proof and refer the reader to~\cite{cai2016duality} for more detail. This allows us to prove corollary~\ref{cor:benchmark}.
\begin{proof}[proof of corollary~\ref{cor:benchmark}]
	Simply combine Corollary~\ref{cor:duality} and Proposition~\ref{prop:flow}, after replacing $\virtVal(\cdot)$ in Proposition~\ref{prop:flow} with $\bar{\virtVal}(\cdot)$.
\end{proof}

\section{Missing Proofs From Section~\ref{sec:prelim} and \ref{subsec:virtValAndBenchmark}}  \label{app:proofs:prelim}
\begin{definition} \label{def:FOSD}
	A Random variable $X$ is first-order stochastically dominated (FOSD) by random variable $Y$ if for every $x$, $\Pr\left[X \geq x\right]\leq \Pr\left[Y \geq x\right]$.
	\\
	{\bf Note: } If $X$ is FOSD by $Y$ then $\expect{}{X} \leq \expect{}{Y}$.
\end{definition}

\begin{proof}[proof of theorem~\ref{thm:pricing}]
	Let $\ind_q$ be an independent indicator random variable that equals $1$ with probability $q$.
    Let $\{X_S\}_S$ be non-negative independent random variables that are drawn from the independent distributions.
    
    Consider a tie breaking rule among the sets, and let the event $X_S = \max_T\{X_T\}$ be true only when $S$ also wins in the tie breaking rule. Set $q_S = \Pr\left[X_S = \max_T\{X_T\}\right]$. So $\sum_S q_S = 1$.
    set $t_S$ s.t. $\Pr\left[X_S \geq t_S \right] = q\cdot q_S$. 

    Let us see that the random variable   $\ind_q \cdot X_S \cdot \ind\left[ X_S = \max\{X_T\} \right]$ is FOSD (definition~\ref{def:FOSD}) by 
    $X_S \cdot \ind\left[ X_S \geq t_S \right]$.
    
    For every $x \geq t_S$, it holds that $\Pr\left[\ind_q \cdot X_S \cdot \ind\left[ X_S = \max\{X_T\} \right] \geq x \right] \leq q \Pr\left[X_S \geq x \right]$, while $\Pr\left[ X_S \cdot \ind \left[X_S \geq  t_S\right] \geq x \right] = \Pr\left[X_S \geq x\right]$.
    
    For every $x < t_S$, it holds that $\Pr\left[\ind_q \cdot X_S \cdot \ind\left[ X_S = \max\{X_T\} \right] \geq x \right] \leq q\cdot q_S$ by definition of $q_S$, while $\Pr\left[ X_S \cdot \ind \left[X_S \geq  t_S\right] \geq x \right] = \Pr\left[X_S \geq  t_S\right] = q\cdot q_S$ by definition of $t_S$. We get that:

    \begin{align*}
    \expect{}{\max_S \{X_S\}} 
    & = 
    \sum_S \expect{}{X_S \cdot \ind\left[ X_S = \max\{X_T\} \right] }   \\
    & = 
    \frac{1}{q} \sum_S \expect{}{\ind_q \cdot X_S \cdot \ind\left[X_S = \max\{X_T\} \right] }   \\
	& \leq 
	\frac{1}{q}\sum_S \expect{}{X_S \cdot \ind\left[ X_S \geq t_S \right] } 
    \end{align*}
   Let $X_S$ be the random variable that first draws $x \gets \distH_S$ and returns $\max\{0, \virtVal_S(x)\}$.
    	Assume the distributions are regular, and refer to  \cite{chawla2010multi} for the irregular case. 
    	As $t_S \geq 0$ we get:
    	\begin{align*}
    		       \expect{x\gets \distH_S}{\max_S \{\virtVal_S(x), 0\}}  \leq  \sum_S \expect{x\gets \distH_S}{\virtVal_S(x)\cdot \ind\left[\virtVal_S(x) \geq t_S \right] }
    	\end{align*}
    	Observe that the above term for each $S$ is the expected virtual value of the mechanism that allocates to a bidder with value $x$ if $x$ exceeds $p_S = \inf \{x: \virtVal_S(x) = t_S\}$. This allocation is achieved by posting a price $p_S$.
    	By Myerson's payment identity:
    	\begin{align*}
    		\expect{x\gets \distH_S}{\virtVal_S(x)\cdot \ind\left[\virtVal_S(x) \geq t'_S \right] } =
    		\expect{x\gets \distH_S}{p_S \cdot \Pr\left[x \geq p_S \right] }
    	\end{align*}
    	This concludes property 1. Property 2 follows by monotonicity of $\virtVal_S$ (regularity of $\distH_S$, for the irregular case refer to \cite{chawla2010multi}):
    	\begin{align*}
    		\sum_S \Pr_{\distH_S}\left[x \geq p_S \right] = \sum_S \Pr_{\distH_S}\left[\virtVal_S(x) \geq t_S \right]  = \sum_S q \cdot q_S = q
    	\end{align*}
    	
%
%
%
    	    


\end{proof}

\begin{proof}[Proof of Theorem~\ref{thm:main}]
    Simply combine Propositions~\ref{prop:single} and~\ref{prop:nonfavorite} with Corollary~\ref{cor:benchmark}, to get:
    \begin{align*}
    \REV(\dist) \leq 4d \SREV^*(\dist) + 16 \BREV(\dist) \leq
    (4d + 16)\max\{\SREV^* , \BREV\}
    \end{align*}
\end{proof}


\section{Missing Proofs From Section~\ref{sec:non-favorite}} \label{app:proof:nonfav}
\begin{proof}[proof of lemma~\ref{lem:coretail}]
    The proof follows from the following algebra:
    \begin{align*}
    \mbox{(NON-FAVORITE)}
    = &\expect{\val \leftarrow \dist}{\max_{S \in \feas}\{\sum_{T \subseteq S} \hrep(T)\cdot\ind\left[\val \notin R_T \right]\}} \\
    = &\expect{\val \leftarrow \dist}{\max_{S \in \feas}\{\sum_{T \subseteq S} \hrep(T)\cdot\ind \left[\hrep(T) \leq t\right]\cdot \ind\left[\val \notin R_T \right]+ \hrep(T)\cdot \ind\left[ \hrep(T) > \cutoff \right]\cdot\ind\left[\val \notin R_T \right]\}} \\
    \leq &\expect{\val \leftarrow \dist}{\max_{S \in \feas}\{\sum_{T \subseteq S} \hrep(T)\cdot\ind \left[\hrep(T) \leq \cutoff \right]\}} \\
    &~~~~~~~~+\expect{\val \leftarrow \dist}{\max_{S \in \feas}\{\sum_{T \subseteq S} \hrep(T)\cdot \ind\left[ \hrep(T) > \cutoff \right]\cdot\ind\left[\val \notin R_T \right]\}}\\
    \leq &\expect{\val \leftarrow \dist}{\max_{S \in \feas}\{\sum_{T \subseteq S} \hrep(T)\cdot \ind\left[\hrep(T) \leq \cutoff \right]\}} \\
    &~~~~~~~~+\expect{\val \leftarrow \dist}{\sum_{T |\hrep(T) > \cutoff} \hrep(T)\cdot\ind\left[\val \notin R_T \right]\}}
    \end{align*}
\end{proof}

\begin{proof}[proof of corollary~\ref{cor:schechtman}]
    By corollary~12 in \cite{schechtman2003concentration}, we know that for all $k >0$:
    \begin{align} \label{eq:schechtman}
    \Pr\left[\val(N) \geq 3 \cdot a  + k\cdot c \right] \leq \min \{1,4 \cdot 2^{-k}\}
    \end{align}
    Substituting $x = 3 \cdot a  + k\cdot c $ gets $k = (x-3a)/c$. Therefore, equation~(\ref{eq:schechtman}) becomes meaningful only when $4 \cdot 2^{-k} \leq 1$, i.e., when $ x  \geq 2 c + 3 a$.
    Computing the expected value of $\val(N)$ gives:
    \begin{align*}
    \int_0^{\infty}{\Pr\left[\val(N) > x\right] dx}
    \leq
    \int_0^{\infty}{\min \{1,4 \cdot 2^{(3a-x)/c}\}  dx}
    =
    2 c + 3a+
    4 \cdot 2^{3a/c} \cdot \int_{2\cdot c + 3a}^{\infty}{ 2^{ -x/c}\cdot dx}
    \end{align*}
    Computing the integral gives:
    $
    -\frac{c}{\ln 2}\left[ 2^{-x/c}\right]_{2c + 3a}^{\infty} =
    \frac{c}{\ln 2}\cdot 2^{-\tfrac{2c + 3a}{c}} =
    \frac{c}{4 \ln 2}\cdot 2^{-3a/c}
    $,
    which, plugged back to the equation concludes that
    $$
    \expect{}{\val(N)} \leq 2c + 3a +  \frac{c}{\ln 2}
    $$
    as desired.
\end{proof}

\begin{proof}[proof of lemma~\ref{lem:cdwtail}]
    For any $x$, we can set price $x$ on the grand bundle. It will sell with probability at least $\Pr_{\hrep \gets \distH_{-T}}[\exists T', \hrep(T') > x]$, as whenever there is a single hyperedge with contribution $x$, certainly the buyer's value for the grand bundle is at least $x$. Therefore, $\BREV \geq x \cdot \Pr_{\hrep \gets \distH_{-T}}[\exists T', \hrep(T') > x]$.
\end{proof}

\begin{proof}[proof of proposition~\ref{prop:tail}]
    By Lemma~\ref{lem:cdwtail}, We get:
    $$\sum_{T \subseteq \items}\sum_{x > \cutoff, f_T(x) > 0} x \cdot f_T(x) \cdot \Pr_{\hrep \gets \distH_{-T}}\left[\exists T',  \hrep(T') > x\right] \leq \sum_{T \subseteq \items} \sum_{x > t, f_T(x) > 0} f_T(x) \cdot \BREV = \sum_{T \subseteq \items} \Pr[\hrep(T) > \cutoff] \cdot \BREV.$$
\end{proof}

\begin{proof}[proof of lemma \ref{lem:prob}]
	By independence:
	\begin{align*}
	\Pr\left[\cup_i E_i \right] = 1 - \prod_i \left(1- \Pr\left[E_i\right]\right)
	\end{align*}
	So if we define $q_i = \Pr[E_i]$, we want to maximize $\prod_i \left(1 -q_i\right)$ subject to $\sum_i q_i = k$.
	Using a Lagrangian multiplier of $\lambda$ on the constraint $\sum_i q_i = k$, we get a new objective of:
	\begin{align*}
	\prod_i \left(1 -q_i \right) +\lambda \cdot (\sum_i{q_i} ) - \lambda k
	\end{align*}
	We see that the partial with respect to $q_i$ of the above is exactly $-\prod_{j \neq i} (1-q_j) + \lambda$. So setting $q_i = k/n$ for all $i$, and $\lambda = (1-k/n)^{n-1}$, we see that $\sum_i q_i = k$ and the partial of the Lagrangian with respect to $q_i$ is $0$ for all $i$. Therefore, this is the optimal solution. At $q_i = k/n$ for all $i$, we have $\prod_i (1-k/n) = (1-k/n)^n \leq e^{-k}$.
\end{proof}

\section{Lower Bounds} \label{app:lowerBound}
Below We formally define the previously mentioned complementarity measures.
\begin{definition}\cite{abraham2012combinatorial,feige2015unifying}(MPH)
	A valuation $\val$ is positive hypergraph (PH) of degree at most $k$ if there exists a hyperedge weight function $\hrep \geq 0$, $\hrep(T)  = 0$ for all $\cardinality{T} > k$, so that $\val(S) = \sum_{T\subseteq S}\hrep(T)$.
	
	A valuation $\val$ is maximum over PH (MPH) of degree at most $k$ if there exists a collection $L$ of such hyperedge weight functions, so that $\val(S) = \max_{\ell \in L}\{\sum_{T\subseteq S}\hrep_\ell(T) \}$.
	\end{definition}
	\begin{definition}\cite{feldman2016simple}(MPS)
		A valuation $\val$ is positive supermodular (PS) of degree at most $k$ if there exists a hyperedge weight function $\hrep \geq 0$, so that for every item $i$, it holds that \\
		$\cardinality{\{j \in \items : \exists T, \hrep(T) >0,  \{j, i\} \subseteq T \} } \leq k$, i.e., item $i$ has at most $k$ neighbors (other items that share a positive hyperedge with it).
		
		A valuation $\val$ is maximum over PS (MPS) of degree at most $k$ if there exists a collection $L$ of such hyperedge weight functions, so that $\val(S) = \max_{\ell \in L}\{\sum_{T\subseteq S}\hrep_\ell(T) \}$.
		\end{definition}
		\begin{definition}\cite{feige2013welfare}(SM)
			A valuation $\val$ is supermodular (SM) of degree at most $k$ if for each item $i$, the number of items $i'$ so that
			there exists a set $S_{i'} \not \ni i$ so that $\val(S_{i'} \cup i) - \val(S_{i'}) > \val(S_{i'}\setminus \{i'\} \cup \{i\}) - \val(S_{i'} \setminus \{i'\})$ is at most $k$, i.e., $i$'s marginal contribution to a set may {\em increase} by adding another item, to at most $k$ different items.
			\end{definition}


\bibliographystyle{abbrv}
\bibliography{AGT}

\end{document}